\tikzstyle{env}=[copoint,regular polygon rotate=0,minimum width=0.2cm, fill=black]
\tikzstyle{every picture}=[baseline=-0.25em]
\tikzstyle{dotpic}=[scale=0.5]
\tikzstyle{diredges}=[every to/.style={diredge}]
\tikzstyle{dot graph}=[shorten <=-0.1mm,shorten >=-0.1mm,scale=0.6]
\tikzstyle{plot point}=[circle,fill=black,minimum width=2mm,inner sep=0]
\tikzstyle{braceedge}=[decorate,decoration={brace,amplitude=2mm,raise=-1mm}]
\tikzstyle{small braceedge}=[decorate,decoration={brace,amplitude=1mm,raise=-1mm}]
\tikzstyle{left hook arrow}=[left hook-latex]
\tikzstyle{right hook arrow}=[right hook-latex]
\tikzstyle{dtriangle}=[fill=yellow,draw=black,shape=isosceles triangle,shape border rotate=-90,isosceles triangle stretches=true,inner sep=1pt,minimum width=0.4cm,minimum height=3mm]
\tikzstyle{vtriang}=[fill=yellow,draw=black,shape=isosceles triangle,shape border rotate=180,isosceles triangle stretches=true,inner sep=1pt,minimum width=0.4cm,minimum height=3mm]
\tikzstyle{triangle}=[fill=yellow,draw=black,shape=isosceles triangle,shape border rotate=90,isosceles triangle stretches=true,inner sep=1pt,minimum width=0.4cm,minimum height=3mm]
\tikzstyle{H box}=[rectangle,fill=yellow,draw=black,xscale=1.2,yscale=1.2, inner sep=1.pt]
\tikzstyle{bn}=[circle,fill=black,draw=black,scale=.4]
\tikzstyle{wn}=[circle,fill=white,draw=black,scale=.6]
\tikzstyle{dn}=[circle,fill=none,draw=gray]
\tikzstyle{black dot}=[inner sep=0.7mm,minimum width=0pt,minimum height=0pt,fill=black,draw=black,shape=circle]
\tikzstyle{dot}=[black dot]
\tikzstyle{smalldot}=[inner sep=0.4mm,minimum width=0pt,minimum height=0pt,fill=black,draw=black,shape=circle]
\tikzstyle{white dot}=[dot,fill=white]
\tikzstyle{antipode}=[white dot,inner sep=0.3mm,font=\footnotesize]
\tikzstyle{smallwhitedot}=[smalldot,fill=white]
\tikzstyle{alt white dot}=[white dot,label={[xshift=3.07mm,yshift=-0.05mm,font=\footnotesize]left:$*$}]
\tikzstyle{gray dot}=[dot,fill=gray!40!white]
\tikzstyle{smallgraydot}=[smalldot,fill=gray!40!white]
\tikzstyle{box vertex}=[draw=black,rectangle]
\tikzstyle{small box}=[box vertex,fill=white]
\tikzstyle{whitebg}=[fill=white,inner sep=2pt]
\tikzstyle{graph state vertex}=[sg vertex,fill=black]
\tikzstyle{wide copoint}=[fill=white,draw=black,shape=isosceles triangle,shape border rotate=90,isosceles triangle stretches=true,inner sep=1pt,minimum width=1.5cm,minimum height=5mm]
\tikzstyle{wide point}=[fill=white,draw=black,shape=isosceles triangle,shape border rotate=-90,isosceles triangle stretches=true,inner sep=1pt,minimum width=1.5cm,minimum height=4mm]
\tikzstyle{very wide copoint}=[fill=white,draw=black,shape=isosceles triangle,shape border rotate=-90,isosceles triangle stretches=true,inner sep=1pt,minimum width=2.5cm,minimum height=4mm]
\tikzstyle{very wide empty copoint}=[draw=black,shape=isosceles triangle,shape border rotate=-90,isosceles triangle stretches=true,inner sep=1pt,minimum width=2.5cm,minimum height=4mm]
\tikzstyle{symm}=[ultra thick,shorten <=-1mm,shorten >=-1mm]
\tikzstyle{square box}=[rectangle,fill=white,draw=black,minimum height=5mm,minimum width=5mm,font=\small]
\tikzstyle{square gray box}=[rectangle,fill=gray!30,draw=black,minimum height=6mm,minimum width=6mm]
\tikzstyle{copoint}=[regular polygon,regular polygon sides=3,draw=black,scale=0.75,inner sep=-0.5pt,minimum width=7mm,fill=white]
\tikzstyle{point}=[regular polygon,regular polygon sides=3,draw=black,scale=0.75,inner sep=-0.5pt,minimum width=7mm,fill=white,regular polygon rotate=180]
\tikzstyle{gray point}=[point,fill=gray!40!white]
\tikzstyle{gray copoint}=[copoint,fill=gray!40!white]
\newcommand{\edgearrow}{{\arrow[black]{>}}}
\newcommand{\edgetick}{{\arrow[black,scale=0.7,very thick]{|}}}
\tikzstyle{diredge}=[->]
\tikzstyle{rdiredge}=[<-]
\tikzstyle{medium diredge}=[->]
\tikzstyle{short diredge}=[->]
\tikzstyle{halfedge}=[-)]
\tikzstyle{other halfedge}=[(-]
\tikzstyle{freeedge}=[(-)]
\tikzstyle{white edge}=[line width=5pt,white]
\tikzstyle{tick}=[postaction=decorate,decoration={markings, mark=at position 0.5 with \edgetick}]
\tikzstyle{small map edge}=[|-latex, gray!60!blue, shorten <=0.9mm, shorten >=0.5mm]
\tikzstyle{thick dashed edge}=[very thick,dashed,gray!40]
\tikzstyle{map edge}=[|-latex,very thick, gray!40, shorten <=1mm, shorten >=0.5mm]
\tikzstyle{tickedge}=[postaction=decorate,
\tikzstyle{dirtickedge}=[postaction=decorate,
\tikzstyle{dirdoubletickedge}=[postaction=decorate,
\newcommand{\boxshape}[3]{%
\pgfdeclareshape{#1}{
\inheritsavedanchors[from=rectangle] 
\inheritanchorborder[from=rectangle]
\inheritanchor[from=rectangle]{center}
\inheritanchor[from=rectangle]{north}
\inheritanchor[from=rectangle]{south}
\inheritanchor[from=rectangle]{west}
\inheritanchor[from=rectangle]{east}
\backgroundpath{
\southwest \pgf@xa=\pgf@x \pgf@ya=\pgf@y
\northeast \pgf@xb=\pgf@x \pgf@yb=\pgf@y

\@tempdima=#2
\@tempdimb=#3

\pgfpathmoveto{\pgfpoint{\pgf@xa - 5pt + \@tempdima}{\pgf@ya}}
\pgfpathlineto{\pgfpoint{\pgf@xa - 5pt - \@tempdima}{\pgf@yb}}
\pgfpathlineto{\pgfpoint{\pgf@xb + 5pt + \@tempdimb}{\pgf@yb}}
\pgfpathlineto{\pgfpoint{\pgf@xb + 5pt - \@tempdimb}{\pgf@ya}}
\pgfpathlineto{\pgfpoint{\pgf@xa - 5pt + \@tempdima}{\pgf@ya}}
\pgfpathclose
}
}}
\tikzstyle{map}=[draw,shape=NEbox,inner sep=7pt]
\tikzstyle{mapdag}=[draw,shape=SEbox,inner sep=7pt]
\tikzstyle{maptrans}=[draw,shape=SWbox,inner sep=7pt]
\tikzstyle{mapconj}=[draw,shape=NWbox,inner sep=7pt]
\tikzstyle{probs}=[shape=semicircle,fill=gray!40!white,draw=black,shape border rotate=180,minimum width=1.2cm]
\tikzstyle{arrs}=[-latex,font=\small,auto]
\tikzstyle{arrow plain}=[arrs]
\tikzstyle{arrow dashed}=[dashed,arrs]
\tikzstyle{arrow bold}=[very thick,arrs]
\tikzstyle{arrow hide}=[draw=white!0,-]
\tikzstyle{arrow reverse}=[latex-]
\tikzstyle{cdnode}=[]
\tikzstyle{gn}=[dot,fill=green,minimum width=0.3cm,inner sep=0pt]
\tikzstyle{rn}=[dot,fill=red,inner sep=0pt,minimum width=0.3cm]
\tikzstyle{rc}=[dot,thick,fill=white,draw = red,minimum width=0.3cm,inner sep=0pt]
\tikzstyle{gc}=[dot,thick,fill=white,draw= green,inner sep=0pt,minimum width=0.3cm]
\tikzstyle{bc}=[dot,thick,fill=white,draw= blue,minimum width=0.3cm]
\tikzstyle{label}=[circle,fill=white,minimum width=0.3cm]
\tikzstyle{clocklabel}=[dot,fill=yellow,draw=black,font=\tiny,inner sep=0.75pt]
\tikzstyle{rsn}=[circle split,draw,fill=red,font=\tiny,inner sep=0.75pt]
\tikzstyle{gsn}=[circle split,draw,fill=green,font=\tiny,inner sep=0.75pt]
\tikzstyle{bsn}=[circle split,draw,fill=blue,font=\tiny,inner sep=0.75pt]
\tikzstyle{rsc}=[circle split,thick,draw= red,draw,fill=white,font=\tiny,inner sep=0.75pt]
\tikzstyle{gsc}=[circle split,thick,draw= green,draw,fill=white,font=\tiny,inner sep=0.75pt]
\tikzstyle{bsc}=[circle split,thick,draw= blue,draw,fill=white,font=\tiny,inner sep=0.75pt]
\tikzstyle{cnot}=[fill=white,shape=circle,inner sep=-1.4pt]
\tikzstyle{wire label}=[font=\tiny, auto]
\newcommand{\bra}[1]{\ensuremath{\left\langle #1 \right|}}
\newcommand{\ket}[1]{\ensuremath{\left|  #1 \right\rangle}}
\tikzstyle{cdiag}=[matrix of math nodes, row sep=3em, column sep=3em, text height=1.5ex, text depth=0.25ex,inner sep=0.5em]
\tikzstyle{arrow above}=[transform canvas={yshift=0.5ex}]
\tikzstyle{arrow below}=[transform canvas={yshift=-0.5ex}]
\newtheorem{Th}{Theorem}[section]
\newtheorem{theorem}[Th]{Theorem}
\newtheorem{proposition}[Th]{Proposition}
\newtheorem{lemma}[Th]{Lemma}
\newenvironment{proof}{\textbf{Proof:}}{\hfill$\Box$\newline}
\title{A universal completion of the ZX-calculus}
\author{Kang Feng Ng \qquad\qquad Quanlong Wang\\ Department of Computer Science, University of Oxford }
\begin{document}

\date{}\maketitle

\begin{abstract}
In this paper, we give a universal completion of the ZX-calculus for the whole of pure qubit quantum mechanics. This proof is based on the completeness of another graphical language: the ZW-calculus, with direct translations between these two graphical systems. 

\end{abstract}

\section{Introduction}
The ZX-calculus introduced by Coecke and Duncan \cite{CoeckeDuncan}  is an intuitive yet mathematically strict graphical language for quantum computing: it is formulated within the framework of compacted closed categories which has a rigorous underpinning for graphical calculus \cite{Joyal}, meanwhile being an important branch of categorical quantum mechanics (CQM)  pioneered by Abramsky and Coecke \cite{Coeckesamson}. Notably, it has intuitional and simple rewriting rules to transform diagrams from one to another. Each diagram in the ZX-calculus has a standard interpretation in the Hilbert spaces, thus makes  it relevant for quantum computing. For the past ten years, the ZX-calculus has enjoyed sucess in applying to fields of quantum information and quantum computation (QIC), in particular (topological) measurement-based quantum computing  \cite{Duncanpx,  Horsman}  and quantum error correction \cite{DuncanLucas,  ckzh}. 

To realise its greatest advantage, the so-called completeness is of concerned with the ZX-calculus:  any equation of diagrams that holds true under the standard interpretation in Hilbert spaces can be derived diagrammatically.  It has been shown in \cite{Zamdzhiev} that the original version of the ZX-calculus  \cite{CoeckeDuncan}  plus the Euler decomposition of Hadamard gate is incomplete for the  overall pure qubit quantum mechanics (QM). Since then, plenty of efforts have been devoted to completion of some part of QM:  real QM \cite{duncanperdrix}, stabilizer QM \cite{Miriam1}, single qubit Clifford+T QM \cite{Miriam1ct}  and Clifford+T QM \cite{Emmanuel}.  Amongst them, the completeness of ZX-calculus for Clifford+T QM is especially interesting, since it is approximatively universal for QM.  Note that their proof relies on the completeness of ZW-calculus for  "qubits with integer coefficients".

In this paper, we prove that the ZX-calculus is complete for the overall pure qubit QM.  Our proof is based on the completeness of  ZW-calculus for the whole qubit QM \cite{amar}:  we first introduce a triangle and a series of  $\lambda$-labeled boxes ($\lambda \geq 0$),  which turns out to be expressible in ZX-calculus  without these symbols.  Then we establish reversible translations from ZX to ZW and vice versa. By checking carefully that all the ZW rewriting rules still hold under translation from ZW to ZX, we finally finished the  proof of completeness of ZX-calculus for the overall qubit QM.


\section{ZX-calculus}

The ZX-calculus is a compact closed category $\mathfrak{C}$. The objects of $\mathfrak{C}$ are natural numbers: $0, 1, 2,  \cdots$; the tensor of objects is just addition of numbers: $m \otimes n = m+n$. The morphisms of $\mathfrak{C}$ are diagrams of the ZX-calculus. A general diagram  $D:k\to l$   with $k$ inputs and $l$ outputs is generated by:
\begin{center} 
\begin{tabular}{|r@{~}r@{~}c@{~}c|r@{~}r@{~}c@{~}c|}
\hline
$R_Z^{(n,m)}$&$:$&$n\to m$ & %
\beginpgfgraphicnamed{diagrams//generator_spider}
\InputIfFileExists{diagrams//generator_spider.tikz}{}{\input{./figures/diagrams//generator_spider.tikz}}
\endpgfgraphicnamed & $A$&$:$&$ 1\to 1$& %
\beginpgfgraphicnamed{diagrams//alphagate}
\begin{tikzpicture}
	\begin{pgfonlayer}{nodelayer}
		\node [style=none] (0) at (0, -0.5) {};
		\node [style=none] (1) at (0, 0.5) {};
		\node [style=gn] (2) at (0, 0) {$\alpha$};
	\end{pgfonlayer}
	\begin{pgfonlayer}{edgelayer}
		\draw (1.center) to (0.center);
	\end{pgfonlayer}
\end{tikzpicture}}
\endpgfgraphicnamed\\
\hline
$H$&$:$&$1\to 1$ &%
\beginpgfgraphicnamed{diagrams//HadaDecomSingleslt}
\begin{tikzpicture}
	\begin{pgfonlayer}{nodelayer}
		\node [style=H box] (0) at (-0.75, 0) {$H$};
		\node [style=none] (1) at (-0.75, -0.5) {};
		\node [style=none] (2) at (-0.75, 0.5) {};
	\end{pgfonlayer}
	\begin{pgfonlayer}{edgelayer}
		\draw (2.center) to (0);
		\draw (1.center) to (0);
	\end{pgfonlayer}
\end{tikzpicture}}
\endpgfgraphicnamed
 &  $\sigma$&$:$&$ 2\to 2$& %
\beginpgfgraphicnamed{diagrams//swap}
\InputIfFileExists{diagrams//swap.tikz}{}{\input{./figures/diagrams//swap.tikz}}
\endpgfgraphicnamed\\\hline
   $\mathbb I$&$:$&$1\to 1$&%
\beginpgfgraphicnamed{diagrams//Id}
\begin{tikzpicture}
	\begin{pgfonlayer}{nodelayer}
		\node [style=none] (1) at (0.5, 0.3) {};
		\node [style=none] (2) at (0.5, -0.3) {};
		\node [style=none] (3) at (0.5, -0.5) {};
		\node [style=none] (4) at (0.5, 0.5) {};
	\end{pgfonlayer}
	\begin{pgfonlayer}{edgelayer}
		\draw (1.center) to (2.center);
	\end{pgfonlayer}
\end{tikzpicture}}
\endpgfgraphicnamed & $e $&$:$&$0 \to 0$& %
\beginpgfgraphicnamed{diagrams//emptysquare}
\InputIfFileExists{diagrams//emptysquare.tikz}{}{\input{./figures/diagrams//emptysquare.tikz}}
\endpgfgraphicnamed\\\hline
   $C_a$&$:$&$ 0\to 2$& %
\beginpgfgraphicnamed{diagrams//cap}
\begin{tikzpicture}
	\begin{pgfonlayer}{nodelayer}
		\node [style=none] (0) at (0, -0) {};
		\node [style=none] (1) at (1, -0) {};
	\end{pgfonlayer}
	\begin{pgfonlayer}{edgelayer}
		\draw [bend left=90, looseness=1.50] (0.center) to (1.center);
	\end{pgfonlayer}
\end{tikzpicture}}
\endpgfgraphicnamed &$ C_u$&$:$&$ 2\to 0$&%
\beginpgfgraphicnamed{diagrams//cup}
\begin{tikzpicture}
	\begin{pgfonlayer}{nodelayer}
		\node [style=none] (0) at (0, 0.5) {};
		\node [style=none] (1) at (1, 0.5) {};
	\end{pgfonlayer}
	\begin{pgfonlayer}{edgelayer}
		\draw [bend right=90, looseness=1.50] (0.center) to (1.center);
	\end{pgfonlayer}
\end{tikzpicture}}
\endpgfgraphicnamed \\\hline
\end{tabular}
\end{center}
where $m,n\in \mathbb N$, $\alpha \in [0,  2\pi)$, and $e$ represents an empty diagram. For the purposes of this paper we extend the language with two new symbols (although in principle they could be eliminated, see 
lemma \ref{lem:lamb_tri_decomposition}):

\begin{center} 
	\begin{tabular}{|r@{~}r@{~}c@{~}c|r@{~}r@{~}c@{~}c|}
		\hline
		$L$&$:$&$1\to 1$  &%
\beginpgfgraphicnamed{diagrams//lambdabox}
\begin{tikzpicture}
	\begin{pgfonlayer}{nodelayer}
		\node [style=H box] (0) at (0, 0) {$\lambda$};
		\node [style=none] (1) at (0, -0.5) {};
		\node [style=none] (2) at (0, 0.5) {};
	\end{pgfonlayer}
	\begin{pgfonlayer}{edgelayer}
		\draw (2.center) to (0);
		\draw (1.center) to (0);
	\end{pgfonlayer}
\end{tikzpicture}}
\endpgfgraphicnamed &$T$&$:$&$1\to 1$&%
\beginpgfgraphicnamed{diagrams//triangle}
\begin{tikzpicture}
	\begin{pgfonlayer}{nodelayer}
		\node [style=none] (0) at (0, 0.5) {};
		\node [style=triangle] (1) at (0, 0) {};
		\node [style=none] (2) at (0, -0.5) {};
	\end{pgfonlayer}
	\begin{pgfonlayer}{edgelayer}
		\draw (0.center) to (2.center);
	\end{pgfonlayer}
\end{tikzpicture}}
\endpgfgraphicnamed \\\hline
	\end{tabular}
\end{center}
where $ \lambda  \geq 0$.

The composition of morphisms is  to combine these components in the following two ways: for any two morphisms $D_1:a\to b$ and $D_2: c\to d$, a \textit{ paralell composition} $D_1\otimes D_2 : a+c\to b+d$ is obtained by placing $D_1$ and $D_2$ side-by-side with $D_1$ on the left of $D_2$; for any two morphisms $D_1:a\to b$ and $D_2: b\to c$,  a  \textit{ sequential  composition} $D_2\circ D_1 : a\to c$ is obtained by placing $D_1$ above $D_2$, connecting the outputs of $D_1$ to the inputs of $D_2$.

There are two kinds of rules for the morphisms of $\mathfrak{C}$:  the structure rules for $\mathfrak{C}$ as an compact closed category, as well as original rewriting rules listed in Figure \ref{figure1} and our extended rules listed in Figure \ref{figure0} and Figure \ref{figure2}.

Note that all the diagrams should be read from top to bottom.

\begin{figure}[!h]
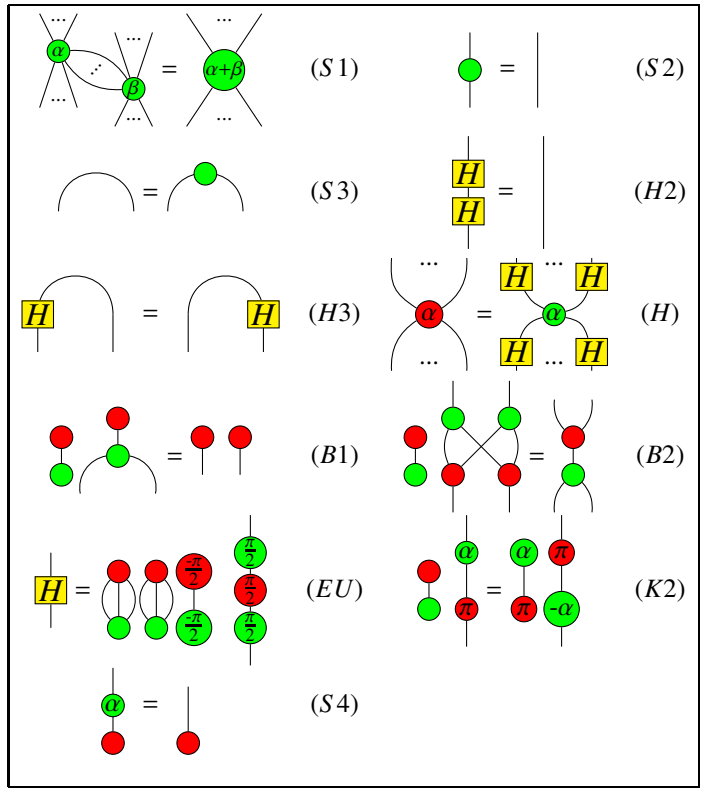

\begin{center}
\[
\quad \qquad\begin{array}{|cccc|}
\hline
\beginpgfgraphicnamed{diagrams//spiderlt}
\InputIfFileExists{diagrams//spiderlt.tikz}{}{\input{./figures/diagrams//spiderlt.tikz}}
\endpgfgraphicnamed=%
\beginpgfgraphicnamed{diagrams//spiderrt}
\InputIfFileExists{diagrams//spiderrt.tikz}{}{\input{./figures/diagrams//spiderrt.tikz}}
\endpgfgraphicnamed &(S1) &%
\beginpgfgraphicnamed{diagrams//s2new}
\InputIfFileExists{diagrams//s2new.tikz}{}{\input{./figures/diagrams//s2new.tikz}}
\endpgfgraphicnamed &(S2)\\
\beginpgfgraphicnamed{diagrams//induced_compact_structure-2wirelt}
\InputIfFileExists{diagrams//induced_compact_structure-2wirelt.tikz}{}{\input{./figures/diagrams//induced_compact_structure-2wirelt.tikz}}
\endpgfgraphicnamed=%
\beginpgfgraphicnamed{diagrams//induced_compact_structure-2wirert}
\InputIfFileExists{diagrams//induced_compact_structure-2wirert.tikz}{}{\input{./figures/diagrams//induced_compact_structure-2wirert.tikz}}
\endpgfgraphicnamed&(S3) & %
\beginpgfgraphicnamed{diagrams//hsquare}
\InputIfFileExists{diagrams//hsquare.tikz}{}{\input{./figures/diagrams//hsquare.tikz}}
\endpgfgraphicnamed &(H2)\\
\beginpgfgraphicnamed{diagrams//hslidecap}
\InputIfFileExists{diagrams//hslidecap.tikz}{}{\input{./figures/diagrams//hslidecap.tikz}}
\endpgfgraphicnamed &(H3) &%
\beginpgfgraphicnamed{diagrams//h2newlt}
\InputIfFileExists{diagrams//h2newlt.tikz}{}{\input{./figures/diagrams//h2newlt.tikz}}
\endpgfgraphicnamed=%
\beginpgfgraphicnamed{diagrams//h2newrt}
\InputIfFileExists{diagrams//h2newrt.tikz}{}{\input{./figures/diagrams//h2newrt.tikz}}
\endpgfgraphicnamed&(H)\\
\beginpgfgraphicnamed{diagrams//b1slt}
\InputIfFileExists{diagrams//b1slt.tikz}{}{\input{./figures/diagrams//b1slt.tikz}}
\endpgfgraphicnamed=%
\beginpgfgraphicnamed{diagrams//b1srt}
\InputIfFileExists{diagrams//b1srt.tikz}{}{\input{./figures/diagrams//b1srt.tikz}}
\endpgfgraphicnamed&(B1) & %
\beginpgfgraphicnamed{diagrams//b2slt}
\InputIfFileExists{diagrams//b2slt.tikz}{}{\input{./figures/diagrams//b2slt.tikz}}
\endpgfgraphicnamed=%
\beginpgfgraphicnamed{diagrams//b2srt}
\InputIfFileExists{diagrams//b2srt.tikz}{}{\input{./figures/diagrams//b2srt.tikz}}
\endpgfgraphicnamed&(B2)\\
\beginpgfgraphicnamed{diagrams//HadaDecomSingleslt}
\InputIfFileExists{diagrams//HadaDecomSingleslt.tikz}{}{\input{./figures/diagrams//HadaDecomSingleslt.tikz}}
\endpgfgraphicnamed= %
\beginpgfgraphicnamed{diagrams//HadaDecomSinglesrt}
\InputIfFileExists{diagrams//HadaDecomSinglesrt.tikz}{}{\input{./figures/diagrams//HadaDecomSinglesrt.tikz}}
\endpgfgraphicnamed&(EU)    & %
\beginpgfgraphicnamed{diagrams//k2slt}
\InputIfFileExists{diagrams//k2slt.tikz}{}{\input{./figures/diagrams//k2slt.tikz}}
\endpgfgraphicnamed=%
\beginpgfgraphicnamed{diagrams//k2srt}
\InputIfFileExists{diagrams//k2srt.tikz}{}{\input{./figures/diagrams//k2srt.tikz}}
\endpgfgraphicnamed&(K2)\\

\beginpgfgraphicnamed{diagrams//alphadelete}
\InputIfFileExists{diagrams//alphadelete.tikz}{}{\input{./figures/diagrams//alphadelete.tikz}}
\endpgfgraphicnamed&(S4) & &\\
&&&\\ 
\hline
\end{array}\]
\end{center}
  \caption{Original ZX-calculus rules, where $\alpha, \beta\in [0,~2\pi)$.}\label{figure1}  
  \end{figure}
  \begin{figure}[!h]
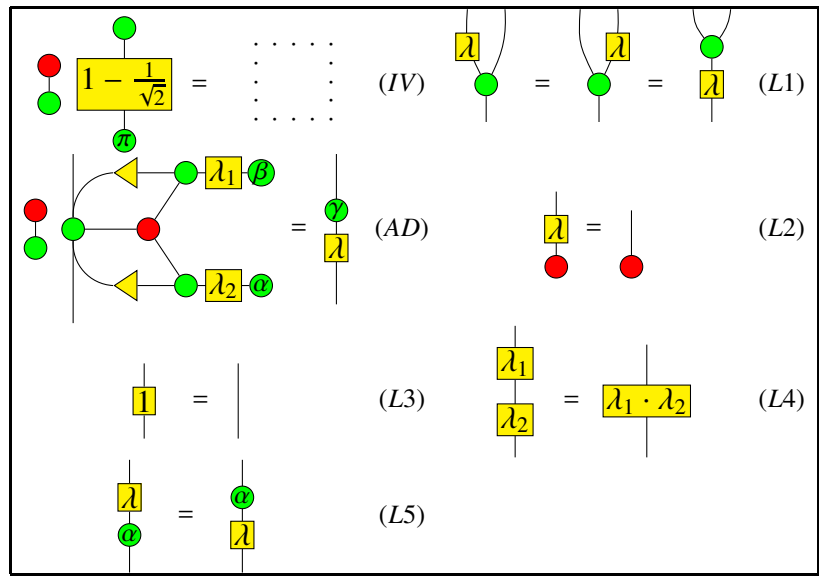

\begin{center}
\[
\quad \qquad\begin{array}{|cccc|}
\hline
\beginpgfgraphicnamed{diagrams//emptyrule}
\InputIfFileExists{diagrams//emptyrule.tikz}{}{\input{./figures/diagrams//emptyrule.tikz}}
\endpgfgraphicnamed &(IV) &%
\beginpgfgraphicnamed{diagrams//lambbranch}
\InputIfFileExists{diagrams//lambbranch.tikz}{}{\input{./figures/diagrams//lambbranch.tikz}}
\endpgfgraphicnamed &(L1)\\

\beginpgfgraphicnamed{diagrams//plus}
\InputIfFileExists{diagrams//plus.tikz}{}{\input{./figures/diagrams//plus.tikz}}
\endpgfgraphicnamed&(AD) &%
\beginpgfgraphicnamed{diagrams//lambdadelete}
\InputIfFileExists{diagrams//lambdadelete.tikz}{}{\input{./figures/diagrams//lambdadelete.tikz}}
\endpgfgraphicnamed &(L2)\\

\beginpgfgraphicnamed{diagrams//sqr1is1}
\InputIfFileExists{diagrams//sqr1is1.tikz}{}{\input{./figures/diagrams//sqr1is1.tikz}}
\endpgfgraphicnamed&(L3) &%
\beginpgfgraphicnamed{diagrams//lambdatimes}
\InputIfFileExists{diagrams//lambdatimes.tikz}{}{\input{./figures/diagrams//lambdatimes.tikz}}
\endpgfgraphicnamed&(L4)\\

\beginpgfgraphicnamed{diagrams//lambdaalpha}
\InputIfFileExists{diagrams//lambdaalpha.tikz}{}{\input{./figures/diagrams//lambdaalpha.tikz}}
\endpgfgraphicnamed&(L5) &&\\
\hline
\end{array}\]
\end{center}
  \caption{Extended ZX-calculus rules for $\lambda$ and addition, where $\lambda, \lambda_1,  \lambda_2 \geq 0, \alpha, \beta, \gamma \in [0,~2\pi);$ in (AD), $\lambda e^{i\gamma} 
  =\lambda_1 e^{i\beta}+ \lambda_2 e^{i\alpha}$.}\label{figure0}  
  \end{figure}


\begin{figure}[!h]
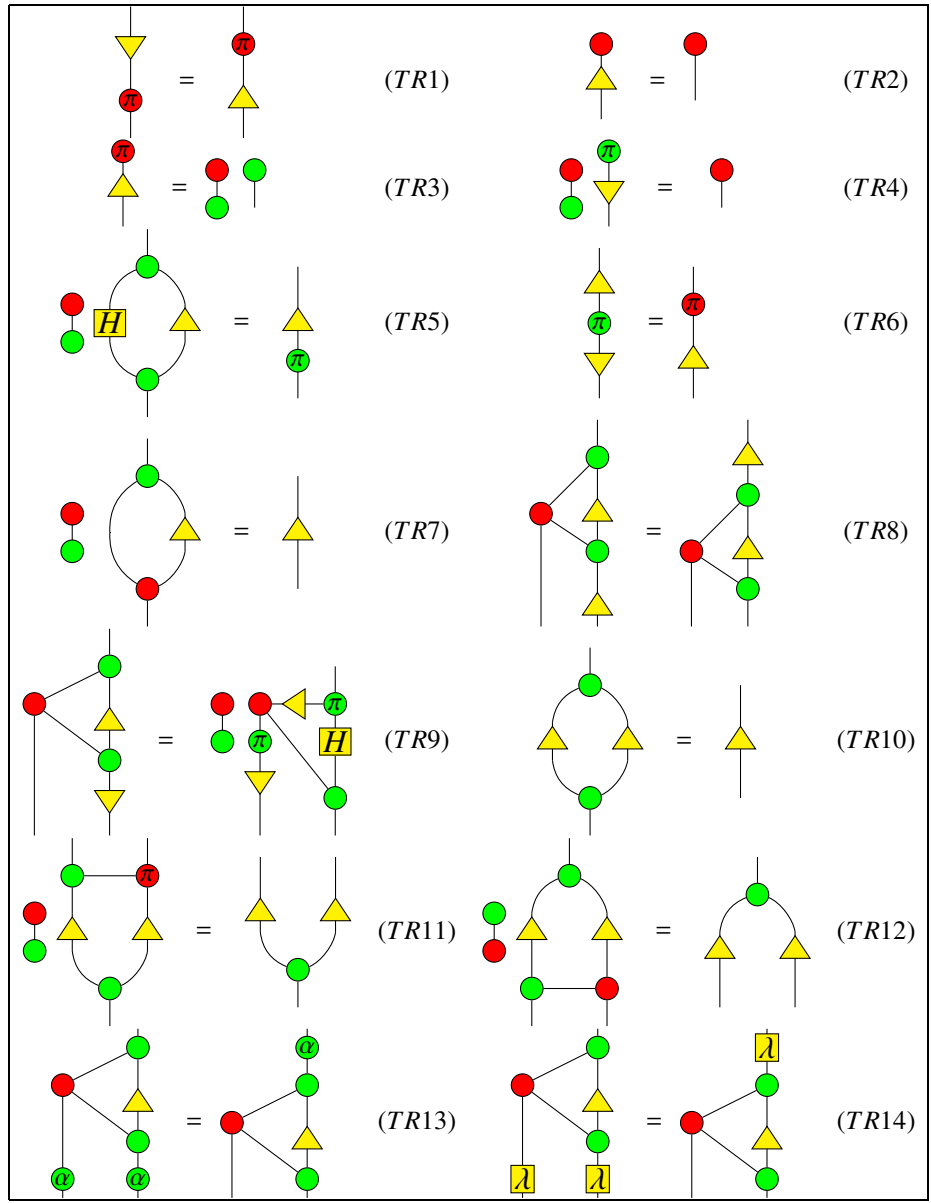

\begin{center}
\[
\quad \qquad\begin{array}{|cccc|}
\hline
 %
\beginpgfgraphicnamed{diagrams//trianglepicommute}
\InputIfFileExists{diagrams//trianglepicommute.tikz}{}{\input{./figures/diagrams//trianglepicommute.tikz}}
\endpgfgraphicnamed &(TR1) &%
\beginpgfgraphicnamed{diagrams//triangleocopy}
\InputIfFileExists{diagrams//triangleocopy.tikz}{}{\input{./figures/diagrams//triangleocopy.tikz}}
\endpgfgraphicnamed &(TR2)\\

\beginpgfgraphicnamed{diagrams//trianglepicopy}
\InputIfFileExists{diagrams//trianglepicopy.tikz}{}{\input{./figures/diagrams//trianglepicopy.tikz}}
\endpgfgraphicnamed&(TR3) & %
\beginpgfgraphicnamed{diagrams//trianglegdpicopy}
\InputIfFileExists{diagrams//trianglegdpicopy.tikz}{}{\input{./figures/diagrams//trianglegdpicopy.tikz}}
\endpgfgraphicnamed &(TR4)\\

\beginpgfgraphicnamed{diagrams//trianglehhopf}
\InputIfFileExists{diagrams//trianglehhopf.tikz}{}{\input{./figures/diagrams//trianglehhopf.tikz}}
\endpgfgraphicnamed &(TR5) &%
\beginpgfgraphicnamed{diagrams//gpiintriangles}
\InputIfFileExists{diagrams//gpiintriangles.tikz}{}{\input{./figures/diagrams//gpiintriangles.tikz}}
\endpgfgraphicnamed&(TR6)\\

\beginpgfgraphicnamed{diagrams//trianglehopf}
\InputIfFileExists{diagrams//trianglehopf.tikz}{}{\input{./figures/diagrams//trianglehopf.tikz}}
\endpgfgraphicnamed&(TR7) & %
\beginpgfgraphicnamed{diagrams//2triangleup}
\InputIfFileExists{diagrams//2triangleup.tikz}{}{\input{./figures/diagrams//2triangleup.tikz}}
\endpgfgraphicnamed&(TR8)\\

\beginpgfgraphicnamed{diagrams//2triangledown}
\InputIfFileExists{diagrams//2triangledown.tikz}{}{\input{./figures/diagrams//2triangledown.tikz}}
\endpgfgraphicnamed&(TR9)    & %
\beginpgfgraphicnamed{diagrams//2trianglehopf}
\InputIfFileExists{diagrams//2trianglehopf.tikz}{}{\input{./figures/diagrams//2trianglehopf.tikz}}
\endpgfgraphicnamed&(TR10)\\

\beginpgfgraphicnamed{diagrams//2triangledeloop}
\InputIfFileExists{diagrams//2triangledeloop.tikz}{}{\input{./figures/diagrams//2triangledeloop.tikz}}
\endpgfgraphicnamed &(TR11) &%
\beginpgfgraphicnamed{diagrams//2triangledeloopnopi}
\InputIfFileExists{diagrams//2triangledeloopnopi.tikz}{}{\input{./figures/diagrams//2triangledeloopnopi.tikz}}
\endpgfgraphicnamed &(TR12)\\

\beginpgfgraphicnamed{diagrams//alphacopyw}
\InputIfFileExists{diagrams//alphacopyw.tikz}{}{\input{./figures/diagrams//alphacopyw.tikz}}
\endpgfgraphicnamed&(TR13) &%
\beginpgfgraphicnamed{diagrams//lambdacopyw}
\InputIfFileExists{diagrams//lambdacopyw.tikz}{}{\input{./figures/diagrams//lambdacopyw.tikz}}
\endpgfgraphicnamed &(TR14)\\

\hline
\end{array}\]
\end{center}

  \caption{Extended ZX-calculus rules for triangle, where $\lambda  \geq 0, \alpha \in [0,~2\pi).$}\label{figure2}
\end{figure}

\FloatBarrier

\begin{lemma}\label{lem:lamb_tri_decomposition}
The triangle %
\beginpgfgraphicnamed{diagrams//triangle}
}
\endpgfgraphicnamed and the lambda box %
\beginpgfgraphicnamed{diagrams//lambdabox}
}
\endpgfgraphicnamed  are expressible in Z and X phases.
\end{lemma}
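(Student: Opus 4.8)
The plan is to exhibit, for each of the two symbols, an explicit ZX-diagram built only from Z- and X-spiders (with Hadamard regarded as the X-basis conjugate of a Z-spider), and then to check that its standard interpretation in $\fhilb$ agrees with the intended matrix. Under that interpretation the triangle $T\colon 1\to 1$ denotes $\begin{pmatrix}1&1\\0&1\end{pmatrix}$ and the lambda box $L\colon 1\to 1$ denotes $\begin{pmatrix}1&0\\0&\lambda\end{pmatrix}$ for $\lambda\geq 0$. Both maps are non-unitary, so neither can be a single phase gate; the content of the lemma is that spiders of unequal arity, together with caps, cups and the real scalar gadgets already available in the plain calculus, suffice to produce them. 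Abstractly, expressibility is immediate from universality of the spider-plus-Hadamard fragment for arbitrary qubit maps, but since these diagrams will be manipulated later I would give explicit constructions rather than invoke universality.

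For the lambda box I would proceed in two steps. First I use the standard fact that a phase-$0$ Z-spider copies in the computational basis, so plugging a state $a\ket0+b\ket1$ into one leg of a $1\to 2$ Z-spider turns the remaining input/output legs into the diagonal map $\mathrm{diag}(a,b)$. Hence $L$ reduces to building the state $\ket0+\lambda\ket1$ from plain generators. Second, I realize the real nonnegative scalar $\lambda$ from phase gadgets: post-selections of the form $\bra{+}R_Z(\alpha)\ket{+}=e^{i\alpha/2}\cos(\alpha/2)$ supply every real scalar in $[0,1]$ up to a global phase that can be cancelled, and the addition rule (AD) together with (L1)--(L4) lets such scalars be combined to reach an arbitrary $\lambda\geq 0$. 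Composing the resulting scalar onto the $\ket1$ branch of a $0\to 1$ Z-spider gives the desired state, and the copy trick then yields $L$.

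For the triangle I would search for a short composite of Hadamards and $\pm\pi/2$ phase spiders whose $2\times 2$ matrix equals $\begin{pmatrix}1&1\\0&1\end{pmatrix}$. Writing $T=I+\ket0\bra1$ suggests combining a bare wire with the rank-one map $\ket0\bra1$ through the addition gadget; a more economical route is to guess a three-factor normal form (a Z-spider, a Hadamard, and an X-spider, each carrying a phase in $\{\pm\pi/2\}$, with an overall $\sqrt2$ scalar) and to fix the phases and the scalar by directly multiplying the three $2\times 2$ factors. Once a candidate is found, verification is a one-line matrix computation.

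I expect the triangle to be the main obstacle: the lambda box is essentially routine once the scalar-building and copy tricks are in place, whereas the triangle requires genuinely producing a non-unitary, non-diagonal map and getting both its phases and its normalization exactly right. A secondary nuisance throughout is the bookkeeping of scalar factors (the $\sqrt2$'s arising from spider fusion and from cap/cup closures), which must be tracked carefully so that the two interpretations agree on the nose rather than merely up to a nonzero scalar.
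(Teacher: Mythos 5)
There is a genuine gap in your treatment of the triangle. Your ``more economical route'' proposes writing $T$ as an overall scalar times a three-factor product of a Z-phase gate, a Hadamard, and an X-phase gate. Every such factor is unitary, so any scalar multiple of their composite has two equal singular values; but $\left(\begin{smallmatrix}1&1\\0&1\end{smallmatrix}\right)$ has singular values $(\sqrt5\pm1)/2$, which are distinct. Hence no choice of phases and normalization can work, and the ``one-line matrix computation'' you defer to can never succeed. Your fallback of summing $I+\ketbra{0}{1}$ via the addition gadget also does not go through as stated, because the rule (AD) only adds maps of the diagonal form $\mathrm{diag}(1,\lambda_i e^{i\alpha_i})$, and $\ketbra{0}{1}$ is not of that shape. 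The paper sidesteps the whole issue: it does not construct the triangle at all, but cites the decomposition already given in the Clifford+T completeness paper of Jeandel, Perdrix and Vilmart, which uses a genuinely larger diagram (multi-legged spiders and $\pi/4$ phases), not a product of single-qubit phase gates.

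Your lambda-box argument is much closer to the paper's, which realizes the fractional part $\{\lambda\}$ as $e^{i\alpha}+e^{-i\alpha}=2\cos\alpha$ with $\alpha=\arccos(\{\lambda\}/2)$ via (AD), and the integer part $[\lambda]$ by induction on (AD) with the identity (rule (L3)); your cosine post-selections plus (AD) are the same idea. However, the final step ``composing the resulting scalar onto the $\ket{1}$ branch of a $0\to1$ Z-spider'' is not a diagrammatic operation: a scalar composed with a state rescales the whole state, and selectively rescaling one branch is exactly the map $\mathrm{diag}(1,\lambda)$ you are trying to build, so as written this is circular. The fix is to stay at the level of $1\to1$ diagonal maps and apply (AD) there directly, as the paper does, rather than detouring through states and scalars.
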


\begin{proof}
The triangle %
\beginpgfgraphicnamed{diagrams//triangle}
}
\endpgfgraphicnamed has been represented by ZX phases  in \cite{Emmanuel}. So we only need to deal with the lambda box. 
First we can write $\lambda$ as a sum of its integer part and remainder part: $\lambda= [\lambda] +\{\lambda\}$, where $ [\lambda]$ is a non-negative integer and $0\leq\{\lambda\}<1$.
Let $n= [\lambda]$.  If $n=1$, then by rule (L3), 
$$%
\beginpgfgraphicnamed{diagrams//sqr1is1}
\InputIfFileExists{diagrams//sqr1is1.tikz}{}{\input{./figures/diagrams//sqr1is1.tikz}}
\endpgfgraphicnamed $$
If  $n=0$, then 
$$%
\beginpgfgraphicnamed{diagrams//lemma0}
\InputIfFileExists{diagrams//lemma0.tikz}{}{\input{./figures/diagrams//lemma0.tikz}}
\endpgfgraphicnamed $$
 
 If $n\geq 2$, then by rule (AD) and induction, we have 
 $$%
\beginpgfgraphicnamed{diagrams//lexpress1}
\InputIfFileExists{diagrams//lexpress1.tikz}{}{\input{./figures/diagrams//lexpress1.tikz}}
\endpgfgraphicnamed, \quad\quad %
\beginpgfgraphicnamed{diagrams//lexpress2}
\InputIfFileExists{diagrams//lexpress2.tikz}{}{\input{./figures/diagrams//lexpress2.tikz}}
\endpgfgraphicnamed.$$
 Again by rule (AD),   we have
  $$%
\beginpgfgraphicnamed{diagrams//lexpress3}
\InputIfFileExists{diagrams//lexpress3.tikz}{}{\input{./figures/diagrams//lexpress3.tikz}}
\endpgfgraphicnamed, \quad\quad where~ \alpha=arccos\frac{\{\lambda\}}{2}.$$
  Therefore, we have
    $$%
\beginpgfgraphicnamed{diagrams//lexpress4}
\InputIfFileExists{diagrams//lexpress4.tikz}{}{\input{./figures/diagrams//lexpress4.tikz}}
\endpgfgraphicnamed.$$

\end{proof}

The diagrams in the ZX-calculus have a standard interpretation $\llbracket \cdot \rrbracket$ in the category of Hilbert spaces:
\[
\left\llbracket %
\beginpgfgraphicnamed{diagrams//generator_spider}
\InputIfFileExists{diagrams//generator_spider.tikz}{}{\input{./figures/diagrams//generator_spider.tikz}}
\endpgfgraphicnamed \right\rrbracket=\ket{0}^{\otimes m}\bra{0}^{\otimes n}+\ket{1}^{\otimes m}\bra{1}^{\otimes n},
\quad
\left\llbracket%
\beginpgfgraphicnamed{diagrams//alphagate}
}
\endpgfgraphicnamed\right\rrbracket=\ket{0}\bra{0}+e^{i\alpha}\ket{1}\bra{1}=\begin{pmatrix}
        1 & 0 \\
        0 & e^{i\alpha}
 \end{pmatrix}.
\]

\[
\left\llbracket%
\beginpgfgraphicnamed{diagrams//HadaDecomSingleslt}
}
\endpgfgraphicnamed\right\rrbracket=\frac{1}{\sqrt{2}}\begin{pmatrix}
        1 & 1 \\
        1 & -1
 \end{pmatrix}, \quad
  \left\llbracket%
\beginpgfgraphicnamed{diagrams//lambdabox}
}
\endpgfgraphicnamed\right\rrbracket=\begin{pmatrix}
        1 & 0 \\
        0 & \lambda
 \end{pmatrix}, \quad
  \left\llbracket%
\beginpgfgraphicnamed{diagrams//triangle}
}
\endpgfgraphicnamed\right\rrbracket=\begin{pmatrix}
        1 & 1 \\
        0 & 1
 \end{pmatrix}, \quad
   \left\llbracket%
\beginpgfgraphicnamed{diagrams//emptysquare}
\InputIfFileExists{diagrams//emptysquare.tikz}{}{\input{./figures/diagrams//emptysquare.tikz}}
\endpgfgraphicnamed\right\rrbracket=1.
   \]

\[
\left\llbracket%
\beginpgfgraphicnamed{diagrams//Id}
}
\endpgfgraphicnamed\right\rrbracket=\begin{pmatrix}
        1 & 0 \\
        0 & 1
 \end{pmatrix}, \quad
  \left\llbracket%
\beginpgfgraphicnamed{diagrams//swap}
\InputIfFileExists{diagrams//swap.tikz}{}{\input{./figures/diagrams//swap.tikz}}
\endpgfgraphicnamed\right\rrbracket=\begin{pmatrix}
        1 & 0 & 0 & 0 \\
        0 & 0 & 1 & 0 \\
        0 & 1 & 0 & 0 \\
        0 & 0 & 0 & 1 
 \end{pmatrix}, \quad
  \left\llbracket%
\beginpgfgraphicnamed{diagrams//cap}
}
\endpgfgraphicnamed\right\rrbracket=\begin{pmatrix}
        1  \\
        0  \\
        0  \\
        1  \\
 \end{pmatrix}, \quad
   \left\llbracket%
\beginpgfgraphicnamed{diagrams//cup}
}
\endpgfgraphicnamed\right\rrbracket=\begin{pmatrix}
        1 & 0 & 0 & 1 
         \end{pmatrix}.
   \]

\[  \llbracket D_1\otimes D_2  \rrbracket =  \llbracket D_1  \rrbracket \otimes  \llbracket  D_2  \rrbracket, \quad 
 \llbracket D_1\circ D_2  \rrbracket =  \llbracket D_1  \rrbracket \circ  \llbracket  D_2  \rrbracket.
  \]

It can be verified that the interpretation $\llbracket \cdot \rrbracket$ is a monoidal functor.

\section{ZW-calculus}
The ZW-calculus is a compact closed category $\mathfrak{F}$. The objects of $\mathfrak{F}$ are natural numbers: $0, 1, 2,  \cdots$; the tensor of 
objects is just addition of numbers: $m \otimes n = m+n$. The morphisms of $\mathfrak{F}$ are diagrams of the ZX-calculus. A general diagram  $D:k\to l$   with $k$ inputs and $l$ outputs is generated by:
\begin{center} 
\begin{tabular}{|r@{~}r@{~}c@{~}c|r@{~}r@{~}c@{~}c|}
\hline
$Z^{(n,m)}$&$:$&$n\to m$ & %
\beginpgfgraphicnamed{diagrams//generatorwtspider}
\InputIfFileExists{diagrams//generatorwtspider.tikz}{}{\input{./figures/diagrams//generatorwtspider.tikz}}
\endpgfgraphicnamed & $R$&$:$&$ 1\to 1$& %
\beginpgfgraphicnamed{diagrams//rgatewhite}
\begin{tikzpicture}
	\begin{pgfonlayer}{nodelayer}
		\node [style=wn] (0) at (0, 0) {};
		\node [style=none] (1) at (0, -0.5) {};
		\node [style=none] (2) at (0.25, 0) {$r$};
		\node [style=none] (3) at (0, 0.5) {};
	\end{pgfonlayer}
	\begin{pgfonlayer}{edgelayer}
		\draw (3.center) to (1.center);
	\end{pgfonlayer}
\end{tikzpicture}}
\endpgfgraphicnamed\\
\hline
$\tau$&$:$&$2\to 2$ &%
\beginpgfgraphicnamed{diagrams//corsszw}
\InputIfFileExists{diagrams//corsszw.tikz}{}{\input{./figures/diagrams//corsszw.tikz}}
\endpgfgraphicnamed
 & $P$&$:$&$1\to 1$  &%
\beginpgfgraphicnamed{diagrams//piblack}
\begin{tikzpicture}
	\begin{pgfonlayer}{nodelayer}
		\node [style=bn] (0) at (0, 0) {};
		\node [style=none] (1) at (0, -0.5) {};
		\node [style=none] (2) at (0, 0.5) {};
	\end{pgfonlayer}
	\begin{pgfonlayer}{edgelayer}
		\draw (2.center) to (1.center);
	\end{pgfonlayer}
\end{tikzpicture}}
\endpgfgraphicnamed \\\hline
  $\sigma$&$:$&$ 2\to 2$& %
\beginpgfgraphicnamed{diagrams//swap}
\InputIfFileExists{diagrams//swap.tikz}{}{\input{./figures/diagrams//swap.tikz}}
\endpgfgraphicnamed &$\mathbb I$&$:$&$1\to 1$&%
\beginpgfgraphicnamed{diagrams//Id}
}
\endpgfgraphicnamed \\\hline
   $e $&$:$&$0 \to 0$& %
\beginpgfgraphicnamed{diagrams//emptysquare}
\InputIfFileExists{diagrams//emptysquare.tikz}{}{\input{./figures/diagrams//emptysquare.tikz}}
\endpgfgraphicnamed &$W$&$:$&$1\to 2$&%
\beginpgfgraphicnamed{diagrams//wblack}
\begin{tikzpicture}
	\begin{pgfonlayer}{nodelayer}
		\node [style=bn] (0) at (0, 0) {};
		\node [style=none] (1) at (-0.25, -0.5) {};
		\node [style=none] (2) at (0.25, -0.5) {};
		\node [style=none] (3) at (0, 0.5) {};
	\end{pgfonlayer}
	\begin{pgfonlayer}{edgelayer}
		\draw (3.center) to (0);
		\draw (0) to (1.center);
		\draw (0) to (2.center);
	\end{pgfonlayer}
\end{tikzpicture}}
\endpgfgraphicnamed 
  \\\hline
   $C_a$&$:$&$ 0\to 2$& %
\beginpgfgraphicnamed{diagrams//cap}
}
\endpgfgraphicnamed &$ C_u$&$:$&$ 2\to 0$&%
\beginpgfgraphicnamed{diagrams//cup}
}
\endpgfgraphicnamed \\\hline
\end{tabular}
\end{center}
where $m,n\in \mathbb{N}$,   $r \in  \mathbb{C}$, and $e$ represents an empty diagram.

The composition of morphisms is  to combine these components in the following two ways: for any two morphisms $D_1:a\to b$ and $D_2: c\to d$, a \textit{ paralell composition} $D_1\otimes D_2 : a+c\to b+d$ is obtained by placing $D_1$ and $D_2$ side-by-side with $D_1$ on the left of $D_2$;
 for any two morphisms $D_1:a\to b$ and $D_2: b\to c$,  a  \textit{ sequential  composition} $D_2\circ D_1 : a\to c$ is obtained by placing $D_1$ above $D_2$, connecting the outputs of $D_1$ to the inputs of $D_2$.

There are two kinds of rules for the morphisms of $\mathfrak{F}$:  the structure rules for $\mathfrak{F}$ as an compact closed category, as well as the rewriting rules listed in Figure \ref{figure3}, \ref{figure4}, \ref{figure5}, \ref{figure6}.

Note that all the diagrams should be read from top to bottom.

\begin{figure}[!h]
\begin{center}
\[
\quad \qquad\begin{array}{|cccc|}
\hline
\beginpgfgraphicnamed{diagrams//reix2}
\InputIfFileExists{diagrams//reix2.tikz}{}{\input{./figures/diagrams//reix2.tikz}}
\endpgfgraphicnamed& &%
\beginpgfgraphicnamed{diagrams//reix3}
\InputIfFileExists{diagrams//reix3.tikz}{}{\input{./figures/diagrams//reix3.tikz}}
\endpgfgraphicnamed & \\
\beginpgfgraphicnamed{diagrams//natnx}
\InputIfFileExists{diagrams//natnx.tikz}{}{\input{./figures/diagrams//natnx.tikz}}
\endpgfgraphicnamed & & %
\beginpgfgraphicnamed{diagrams//natex}
\InputIfFileExists{diagrams//natex.tikz}{}{\input{./figures/diagrams//natex.tikz}}
\endpgfgraphicnamed &\\
\beginpgfgraphicnamed{diagrams//reix1}
\InputIfFileExists{diagrams//reix1.tikz}{}{\input{./figures/diagrams//reix1.tikz}}
\endpgfgraphicnamed & &%
\beginpgfgraphicnamed{diagrams//uncowL}
\InputIfFileExists{diagrams//uncowL.tikz}{}{\input{./figures/diagrams//uncowL.tikz}}
\endpgfgraphicnamed&\\
\beginpgfgraphicnamed{diagrams//uncowR}
\InputIfFileExists{diagrams//uncowR.tikz}{}{\input{./figures/diagrams//uncowR.tikz}}
\endpgfgraphicnamed& & %
\beginpgfgraphicnamed{diagrams//natww}
\InputIfFileExists{diagrams//natww.tikz}{}{\input{./figures/diagrams//natww.tikz}}
\endpgfgraphicnamed&\\
\beginpgfgraphicnamed{diagrams//natwx}
\InputIfFileExists{diagrams//natwx.tikz}{}{\input{./figures/diagrams//natwx.tikz}}
\endpgfgraphicnamed&  & %
\beginpgfgraphicnamed{diagrams//comcow}
\InputIfFileExists{diagrams//comcow.tikz}{}{\input{./figures/diagrams//comcow.tikz}}
\endpgfgraphicnamed&\\
\beginpgfgraphicnamed{diagrams//natmw}
\InputIfFileExists{diagrams//natmw.tikz}{}{\input{./figures/diagrams//natmw.tikz}}
\endpgfgraphicnamed & &%
\beginpgfgraphicnamed{diagrams//natmnw}
\InputIfFileExists{diagrams//natmnw.tikz}{}{\input{./figures/diagrams//natmnw.tikz}}
\endpgfgraphicnamed &\\
\beginpgfgraphicnamed{diagrams//natmnew}
\InputIfFileExists{diagrams//natmnew.tikz}{}{\input{./figures/diagrams//natmnew.tikz}}
\endpgfgraphicnamed& &%
\beginpgfgraphicnamed{diagrams//hopf}
\InputIfFileExists{diagrams//hopf.tikz}{}{\input{./figures/diagrams//hopf.tikz}}
\endpgfgraphicnamed &\\
\beginpgfgraphicnamed{diagrams//sym3}
\InputIfFileExists{diagrams//sym3.tikz}{}{\input{./figures/diagrams//sym3.tikz}}
\endpgfgraphicnamed & &  &\\
\hline
\end{array}\]
\end{center}

  \caption{ZW-calculus rules I}\label{figure3}
\end{figure}

\begin{figure}[!h]
\begin{center}
\[
\quad \qquad\begin{array}{|cccc|}
\hline
\beginpgfgraphicnamed{diagrams//sym2}
\InputIfFileExists{diagrams//sym2.tikz}{}{\input{./figures/diagrams//sym2.tikz}}
\endpgfgraphicnamed &%
\beginpgfgraphicnamed{diagrams//inv}
\InputIfFileExists{diagrams//inv.tikz}{}{\input{./figures/diagrams//inv.tikz}}
\endpgfgraphicnamed&&\\
\beginpgfgraphicnamed{diagrams//antnx}
\InputIfFileExists{diagrams//antnx.tikz}{}{\input{./figures/diagrams//antnx.tikz}}
\endpgfgraphicnamed  &%
\beginpgfgraphicnamed{diagrams//symz}
\InputIfFileExists{diagrams//symz.tikz}{}{\input{./figures/diagrams//symz.tikz}}
\endpgfgraphicnamed&&\\
\beginpgfgraphicnamed{diagrams//uncozR}
\InputIfFileExists{diagrams//uncozR.tikz}{}{\input{./figures/diagrams//uncozR.tikz}}
\endpgfgraphicnamed  &%
\beginpgfgraphicnamed{diagrams//natzz}
\InputIfFileExists{diagrams//natzz.tikz}{}{\input{./figures/diagrams//natzz.tikz}}
\endpgfgraphicnamed  & & \\
\beginpgfgraphicnamed{diagrams//ph}
\InputIfFileExists{diagrams//ph.tikz}{}{\input{./figures/diagrams//ph.tikz}}
\endpgfgraphicnamed  &%
\beginpgfgraphicnamed{diagrams//natnc}
\InputIfFileExists{diagrams//natnc.tikz}{}{\input{./figures/diagrams//natnc.tikz}}
\endpgfgraphicnamed  & & \\
\beginpgfgraphicnamed{diagrams//natmc}
\InputIfFileExists{diagrams//natmc.tikz}{}{\input{./figures/diagrams//natmc.tikz}}
\endpgfgraphicnamed  &%
\beginpgfgraphicnamed{diagrams//loop}
\InputIfFileExists{diagrams//loop.tikz}{}{\input{./figures/diagrams//loop.tikz}}
\endpgfgraphicnamed  & & \\
\beginpgfgraphicnamed{diagrams//unx}
\InputIfFileExists{diagrams//unx.tikz}{}{\input{./figures/diagrams//unx.tikz}}
\endpgfgraphicnamed  &%
\beginpgfgraphicnamed{diagrams//rng1}
\InputIfFileExists{diagrams//rng1.tikz}{}{\input{./figures/diagrams//rng1.tikz}}
\endpgfgraphicnamed  & & \\
\beginpgfgraphicnamed{diagrams//rng-1}
\InputIfFileExists{diagrams//rng-1.tikz}{}{\input{./figures/diagrams//rng-1.tikz}}
\endpgfgraphicnamed  &%
\beginpgfgraphicnamed{diagrams//rngrsx}
\InputIfFileExists{diagrams//rngrsx.tikz}{}{\input{./figures/diagrams//rngrsx.tikz}}
\endpgfgraphicnamed  & & \\
\hline
\end{array}\]
\end{center}

  \caption{ZW-calculus rules II}\label{figure4}
\end{figure}

\begin{figure}[!h]
\begin{center}
\[
\quad \qquad\begin{array}{|cccc|}
\hline
\beginpgfgraphicnamed{diagrams//rngrsp}
\InputIfFileExists{diagrams//rngrsp.tikz}{}{\input{./figures/diagrams//rngrsp.tikz}}
\endpgfgraphicnamed  &%
\beginpgfgraphicnamed{diagrams//natrc}
\InputIfFileExists{diagrams//natrc.tikz}{}{\input{./figures/diagrams//natrc.tikz}}
\endpgfgraphicnamed   & & \\
\beginpgfgraphicnamed{diagrams//natrec}
\InputIfFileExists{diagrams//natrec.tikz}{}{\input{./figures/diagrams//natrec.tikz}}
\endpgfgraphicnamed  &%
\beginpgfgraphicnamed{diagrams//phr}
\InputIfFileExists{diagrams//phr.tikz}{}{\input{./figures/diagrams//phr.tikz}}
\endpgfgraphicnamed  & & \\
\hline
\end{array}\]
\end{center}

  \caption{ZW-calculus rules III}\label{figure5}
\end{figure}

The diagrams in the ZX-calculus have a standard interpretation $\llbracket \cdot \rrbracket$ in the category of Hilbert spaces:
\[
\left\llbracket %
\beginpgfgraphicnamed{diagrams//generatorwtspider}
\InputIfFileExists{diagrams//generatorwtspider.tikz}{}{\input{./figures/diagrams//generatorwtspider.tikz}}
\endpgfgraphicnamed \right\rrbracket=\ket{0}^{\otimes m}\bra{0}^{\otimes n}+\ket{1}^{\otimes m}\bra{1}^{\otimes n},
\quad
\left\llbracket%
\beginpgfgraphicnamed{diagrams//rgatewhite}
}
\endpgfgraphicnamed\right\rrbracket=\ket{0}\bra{0}+r\ket{1}\bra{1}=\begin{pmatrix}
        1 & 0 \\
        0 & r
 \end{pmatrix}.
\]

\[
\left\llbracket%
\beginpgfgraphicnamed{diagrams//corsszw}
\InputIfFileExists{diagrams//corsszw.tikz}{}{\input{./figures/diagrams//corsszw.tikz}}
\endpgfgraphicnamed\right\rrbracket=\begin{pmatrix}
         1 & 0 & 0 & 0 \\
        0 & 0 & 1 & 0 \\
        0 & 1 & 0 & 0 \\
        0 & 0 & 0 & -1 
 \end{pmatrix}, \quad
  \left\llbracket%
\beginpgfgraphicnamed{diagrams//wblack}
}
\endpgfgraphicnamed\right\rrbracket=\begin{pmatrix}
        0 & 1  \\
        1 & 0 \\
        1 & 0  \\
        0 & 0 
 \end{pmatrix}, \quad
  \left\llbracket%
\beginpgfgraphicnamed{diagrams//piblack}
}
\endpgfgraphicnamed\right\rrbracket=\begin{pmatrix}
        0 & 1 \\
        1 & 0
 \end{pmatrix}, \quad
   \left\llbracket%
\beginpgfgraphicnamed{diagrams//emptysquare}
\InputIfFileExists{diagrams//emptysquare.tikz}{}{\input{./figures/diagrams//emptysquare.tikz}}
\endpgfgraphicnamed\right\rrbracket=1.
   \]

\[
\left\llbracket%
\beginpgfgraphicnamed{diagrams//Id}
}
\endpgfgraphicnamed\right\rrbracket=\begin{pmatrix}
        1 & 0 \\
        0 & 1
 \end{pmatrix}, \quad
  \left\llbracket%
\beginpgfgraphicnamed{diagrams//swap}
\InputIfFileExists{diagrams//swap.tikz}{}{\input{./figures/diagrams//swap.tikz}}
\endpgfgraphicnamed\right\rrbracket=\begin{pmatrix}
        1 & 0 & 0 & 0 \\
        0 & 0 & 1 & 0 \\
        0 & 1 & 0 & 0 \\
        0 & 0 & 0 & 1 
 \end{pmatrix}, \quad
  \left\llbracket%
\beginpgfgraphicnamed{diagrams//cap}
}
\endpgfgraphicnamed\right\rrbracket=\begin{pmatrix}
        1  \\
        0  \\
        0  \\
        1  \\
 \end{pmatrix}, \quad
   \left\llbracket%
\beginpgfgraphicnamed{diagrams//cup}
}
\endpgfgraphicnamed\right\rrbracket=\begin{pmatrix}
        1 & 0 & 0 & 1 
         \end{pmatrix}.
   \]

\[  \llbracket D_1\otimes D_2  \rrbracket =  \llbracket D_1  \rrbracket \otimes  \llbracket  D_2  \rrbracket, \quad 
 \llbracket D_1\circ D_2  \rrbracket =  \llbracket D_1  \rrbracket \circ  \llbracket  D_2  \rrbracket.
  \]

It can be verified that the interpretation $\llbracket \cdot \rrbracket$ is a monoidal functor.

\section{Interpretations from ZX-calculus to  ZW-calculus and back forth}
First we define the interpretation $\llbracket \cdot \rrbracket_{XW}$   from ZX-calculus to  ZW-calculus  as follows:
\[
 \left\llbracket%
\beginpgfgraphicnamed{diagrams//emptysquare}
\InputIfFileExists{diagrams//emptysquare.tikz}{}{\input{./figures/diagrams//emptysquare.tikz}}
\endpgfgraphicnamed\right\rrbracket_{XW}=  %
\beginpgfgraphicnamed{diagrams//emptysquare}
\InputIfFileExists{diagrams//emptysquare.tikz}{}{\input{./figures/diagrams//emptysquare.tikz}}
\endpgfgraphicnamed,  \quad
  \left\llbracket%
\beginpgfgraphicnamed{diagrams//Id}
}
\endpgfgraphicnamed\right\rrbracket_{XW}=  %
\beginpgfgraphicnamed{diagrams//Id}
}
\endpgfgraphicnamed,   \quad
 \left\llbracket%
\beginpgfgraphicnamed{diagrams//cap}
}
\endpgfgraphicnamed\right\rrbracket_{XW}=  %
\beginpgfgraphicnamed{diagrams//cap}
}
\endpgfgraphicnamed,   \quad
  \left\llbracket%
\beginpgfgraphicnamed{diagrams//cup}
}
\endpgfgraphicnamed\right\rrbracket_{XW}=  %
\beginpgfgraphicnamed{diagrams//cup}
}
\endpgfgraphicnamed,  
  \]
  
  \[
   \left\llbracket%
\beginpgfgraphicnamed{diagrams//swap}
\InputIfFileExists{diagrams//swap.tikz}{}{\input{./figures/diagrams//swap.tikz}}
\endpgfgraphicnamed\right\rrbracket_{XW}=  %
\beginpgfgraphicnamed{diagrams//swap}
\InputIfFileExists{diagrams//swap.tikz}{}{\input{./figures/diagrams//swap.tikz}}
\endpgfgraphicnamed,   \quad
   \left\llbracket%
\beginpgfgraphicnamed{diagrams//generator_spider-nonum}
\InputIfFileExists{diagrams//generator_spider-nonum.tikz}{}{\input{./figures/diagrams//generator_spider-nonum.tikz}}
\endpgfgraphicnamed\right\rrbracket_{XW}=  %
\beginpgfgraphicnamed{diagrams//spiderwhite}
\InputIfFileExists{diagrams//spiderwhite.tikz}{}{\input{./figures/diagrams//spiderwhite.tikz}}
\endpgfgraphicnamed,   \quad 
     \left\llbracket%
\beginpgfgraphicnamed{diagrams//alphagate}
}
\endpgfgraphicnamed\right\rrbracket_{XW}=  %
\beginpgfgraphicnamed{diagrams//alphagatewhite}
\begin{tikzpicture}
	\begin{pgfonlayer}{nodelayer}
		\node [style=wn] (0) at (0.25, 0) {};
		\node [style=none] (1) at (0.25, -0.5) {};
		\node [style=none] (2) at (0.25, 0.5) {};
		\node [style=none] (3) at (0.5, 0) {$e^{i\alpha}$};
	\end{pgfonlayer}
	\begin{pgfonlayer}{edgelayer}
		\draw (2.center) to (1.center);
	\end{pgfonlayer}
\end{tikzpicture}}
\endpgfgraphicnamed,   \quad 
       \left\llbracket%
\beginpgfgraphicnamed{diagrams//lambdabox}
}
\endpgfgraphicnamed\right\rrbracket_{XW}=  %
\beginpgfgraphicnamed{diagrams//lambdagatewhiteld}
\begin{tikzpicture}
	\begin{pgfonlayer}{nodelayer}
		\node [style=none] (0) at (0, -0.5) {};
		\node [style=none] (1) at (0.25, 0) {$\lambda$};
		\node [style=wn] (2) at (0, 0) {};
		\node [style=none] (3) at (0, 0.5) {};
	\end{pgfonlayer}
	\begin{pgfonlayer}{edgelayer}
		\draw (3.center) to (0.center);
	\end{pgfonlayer}
\end{tikzpicture}}
\endpgfgraphicnamed,   
         \]

 \[
  \left\llbracket%
\beginpgfgraphicnamed{diagrams//HadaDecomSingleslt}
}
\endpgfgraphicnamed\right\rrbracket_{XW}=  %
\beginpgfgraphicnamed{diagrams//Hadamardwhite}
\InputIfFileExists{diagrams//Hadamardwhite.tikz}{}{\input{./figures/diagrams//Hadamardwhite.tikz}}
\endpgfgraphicnamed,   \quad
    \left\llbracket%
\beginpgfgraphicnamed{diagrams//triangle}
}
\endpgfgraphicnamed\right\rrbracket_{XW}=  %
\beginpgfgraphicnamed{diagrams//trianglewhite}
\InputIfFileExists{diagrams//trianglewhite.tikz}{}{\input{./figures/diagrams//trianglewhite.tikz}}
\endpgfgraphicnamed, \]
    
    \[ \llbracket D_1\otimes D_2  \rrbracket_{XW} =  \llbracket D_1  \rrbracket_{XW} \otimes  \llbracket  D_2  \rrbracket_{XW}, \quad 
 \llbracket D_1\circ D_2  \rrbracket_{XW} =  \llbracket D_1  \rrbracket_{XW} \circ  \llbracket  D_2  \rrbracket_{XW},
 \]
where $ \alpha \in [0,~2\pi), ~ \lambda  \geq 0$.

\begin{lemma}\label{xtowpreservesemantics}
Suppose $D$ is an arbitrary diagram in ZX-calculus. Then  $\llbracket \llbracket D \rrbracket_{XW}\rrbracket = \llbracket D \rrbracket$.
\end{lemma}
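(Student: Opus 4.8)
The plan is to exploit the fact that both assignments $D \mapsto \llbracket D \rrbracket$ and $D \mapsto \llbracket \llbracket D \rrbracket_{XW} \rrbracket$ are monoidal functors out of $\mathfrak{C}$, and then to invoke the standard principle that two monoidal functors agreeing on a set of generating morphisms agree on every morphism. Concretely, $\llbracket \cdot \rrbracket_{XW}$ is \emph{defined} so as to preserve $\otimes$ and $\circ$, and the ZW interpretation $\llbracket \cdot \rrbracket$ is itself a monoidal functor, so the composite $\llbracket \llbracket \cdot \rrbracket_{XW} \rrbracket$ preserves $\otimes$ and $\circ$ as well; the direct interpretation $\llbracket \cdot \rrbracket$ on ZX is monoidal by the remark following its definition. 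Since every diagram of $\mathfrak{C}$ is obtained from the generators $R_Z^{(n,m)}, A, H, \sigma, \mathbb{I}, e, C_a, C_u, L, T$ by parallel and sequential composition, it suffices to verify $\llbracket \llbracket G \rrbracket_{XW} \rrbracket = \llbracket G \rrbracket$ for each generator $G$, and the general statement then follows by structural induction on $D$.

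First I would dispatch the generators whose translation is essentially trivial. The structural generators $\sigma, \mathbb{I}, e, C_a, C_u$ are sent by $\llbracket \cdot \rrbracket_{XW}$ to the identically named ZW generators, which carry exactly the same standard interpretation in the category of Hilbert spaces (compare the two interpretation tables entry by entry); hence the desired equation holds on the nose in these cases. Likewise the Z-spider $R_Z^{(n,m)}$ is sent to the white ZW spider, whose interpretation $\ket{0}^{\otimes m}\bra{0}^{\otimes n}+\ket{1}^{\otimes m}\bra{1}^{\otimes n}$ is definitionally equal to that of the Z-spider, so this generator requires no computation.

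The genuinely computational cases are the four remaining generators $A$, $H$, $L$, $T$, whose images under $\llbracket \cdot \rrbracket_{XW}$ are composite ZW diagrams assembled from the $R$-gate, the black $\pi$-node $P$, the $W$-node, caps and cups, and white spiders. For each I would expand the translated diagram into its constituent ZW generators, read the matrices off the ZW interpretation table, and compose them in the order dictated by the diagram, checking that the resulting $2\times 2$ matrix equals $\mathrm{diag}(1,e^{i\alpha})$, $\tfrac{1}{\sqrt2}\bigl(\begin{smallmatrix}1&1\\1&-1\end{smallmatrix}\bigr)$, $\mathrm{diag}(1,\lambda)$, and $\bigl(\begin{smallmatrix}1&1\\0&1\end{smallmatrix}\bigr)$ respectively. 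This matrix bookkeeping is the main obstacle: the $W$-node has a nonstandard $4\times 2$ interpretation, and the caps and cups used in the Hadamard and triangle translations introduce index contractions whose ordering must be tracked carefully to avoid spurious transpositions or permutations. Once these four identities are confirmed, the monoidal functoriality established at the outset closes the induction and yields $\llbracket \llbracket D \rrbracket_{XW}\rrbracket = \llbracket D \rrbracket$ for all $D$.
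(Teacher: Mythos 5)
Your proposal is correct and is exactly the argument the paper intends: the paper dismisses this lemma with ``The proof is easy,'' meaning precisely the reduction to generators via monoidal functoriality followed by a case-by-case check, which you have laid out in more detail than the authors do. The only caveat is that you, like the paper, stop short of actually performing the four matrix computations for $A$, $H$, $L$, $T$; these are routine but are the entire mathematical content of the lemma, so a complete write-up would include them.
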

The proof is easy.

Next we define the interpretation $\llbracket \cdot \rrbracket_{WX}$   from ZW-calculus to  ZX-calculus  as follows:

\[
 \left\llbracket%
\beginpgfgraphicnamed{diagrams//emptysquare}
\InputIfFileExists{diagrams//emptysquare.tikz}{}{\input{./figures/diagrams//emptysquare.tikz}}
\endpgfgraphicnamed\right\rrbracket_{WX}=  %
\beginpgfgraphicnamed{diagrams//emptysquare}
\InputIfFileExists{diagrams//emptysquare.tikz}{}{\input{./figures/diagrams//emptysquare.tikz}}
\endpgfgraphicnamed,  \quad
  \left\llbracket%
\beginpgfgraphicnamed{diagrams//Id}
}
\endpgfgraphicnamed\right\rrbracket_{WX}=  %
\beginpgfgraphicnamed{diagrams//Id}
}
\endpgfgraphicnamed,   \quad
 \left\llbracket%
\beginpgfgraphicnamed{diagrams//cap}
}
\endpgfgraphicnamed\right\rrbracket_{WX}=  %
\beginpgfgraphicnamed{diagrams//cap}
}
\endpgfgraphicnamed,   \quad
  \left\llbracket%
\beginpgfgraphicnamed{diagrams//cup}
}
\endpgfgraphicnamed\right\rrbracket_{WX}=  %
\beginpgfgraphicnamed{diagrams//cup}
}
\endpgfgraphicnamed,  
  \]
  
  \[
   \left\llbracket%
\beginpgfgraphicnamed{diagrams//swap}
\InputIfFileExists{diagrams//swap.tikz}{}{\input{./figures/diagrams//swap.tikz}}
\endpgfgraphicnamed\right\rrbracket_{WX}=  %
\beginpgfgraphicnamed{diagrams//swap}
\InputIfFileExists{diagrams//swap.tikz}{}{\input{./figures/diagrams//swap.tikz}}
\endpgfgraphicnamed,   \quad
   \left\llbracket%
\beginpgfgraphicnamed{diagrams//spiderwhite}
\InputIfFileExists{diagrams//spiderwhite.tikz}{}{\input{./figures/diagrams//spiderwhite.tikz}}
\endpgfgraphicnamed\right\rrbracket_{WX}=  %
\beginpgfgraphicnamed{diagrams//generator_spider-nonum}
\InputIfFileExists{diagrams//generator_spider-nonum.tikz}{}{\input{./figures/diagrams//generator_spider-nonum.tikz}}
\endpgfgraphicnamed,   \quad 
     \left\llbracket%
\beginpgfgraphicnamed{diagrams//rgatewhite}
}
\endpgfgraphicnamed\right\rrbracket_{WX}=  %
\beginpgfgraphicnamed{diagrams//alphalambdagate}
\begin{tikzpicture}
	\begin{pgfonlayer}{nodelayer}
		\node [style=H box] (0) at (0, -0.25) {$\lambda$};
		\node [style=none] (1) at (0, 0.75) {};
		\node [style=gn] (2) at (0, 0.25) {$\alpha$};
		\node [style=none] (3) at (0, -0.75) {};
	\end{pgfonlayer}
	\begin{pgfonlayer}{edgelayer}
		\draw (1.center) to (3.center);
	\end{pgfonlayer}
\end{tikzpicture}}
\endpgfgraphicnamed,   \quad 
       \left\llbracket%
\beginpgfgraphicnamed{diagrams//piblack}
}
\endpgfgraphicnamed\right\rrbracket_{WX}=  %
\beginpgfgraphicnamed{diagrams//pired}
\begin{tikzpicture}
	\begin{pgfonlayer}{nodelayer}
		\node [style=none] (0) at (0, -0.5) {};
		\node [style=none] (1) at (0, 0.5) {};
		\node [style=rn] (2) at (0, 0) {$\pi$};
	\end{pgfonlayer}
	\begin{pgfonlayer}{edgelayer}
		\draw (1.center) to (0.center);
	\end{pgfonlayer}
\end{tikzpicture}}
\endpgfgraphicnamed,   
         \]

 \[
  \left\llbracket%
\beginpgfgraphicnamed{diagrams//corsszw}
\InputIfFileExists{diagrams//corsszw.tikz}{}{\input{./figures/diagrams//corsszw.tikz}}
\endpgfgraphicnamed\right\rrbracket_{WX}=  %
\beginpgfgraphicnamed{diagrams//crossxz}
\InputIfFileExists{diagrams//crossxz.tikz}{}{\input{./figures/diagrams//crossxz.tikz}}
\endpgfgraphicnamed,   \quad \quad
    \left\llbracket%
\beginpgfgraphicnamed{diagrams//wblack}
}
\endpgfgraphicnamed\right\rrbracket_{WX}=  %
\beginpgfgraphicnamed{diagrams//winzx}
\InputIfFileExists{diagrams//winzx.tikz}{}{\input{./figures/diagrams//winzx.tikz}}
\endpgfgraphicnamed, \]

    \[ \llbracket D_1\otimes D_2  \rrbracket_{WX} =  \llbracket D_1  \rrbracket_{WX} \otimes  \llbracket  D_2  \rrbracket_{WX}, \quad 
 \llbracket D_1\circ D_2  \rrbracket_{WX} =  \llbracket D_1  \rrbracket_{WX} \circ  \llbracket  D_2  \rrbracket_{WX}.
 \]
where $r=\lambda e^{i\alpha},~   \alpha \in [0,~2\pi), ~ \lambda  \geq 0$.

\begin{lemma}\label{wtoxpreservesemantics}
Suppose $D$ is an arbitrary diagram in ZW-calculus. Then  $\llbracket \llbracket D \rrbracket_{WX}\rrbracket = \llbracket D \rrbracket$.
\end{lemma}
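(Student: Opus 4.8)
The plan is to reduce the statement to a finite check on generators by exploiting functoriality, exactly as in the proof of the companion Lemma~\ref{xtowpreservesemantics}, the difference being that here the generator computations carry real content. Both assignments in play are compositional: the standard interpretation satisfies $\llbracket D_1\otimes D_2\rrbracket=\llbracket D_1\rrbracket\otimes\llbracket D_2\rrbracket$ and $\llbracket D_1\circ D_2\rrbracket=\llbracket D_1\rrbracket\circ\llbracket D_2\rrbracket$, and the translation $\llbracket\cdot\rrbracket_{WX}$ was \emph{defined} to respect $\otimes$ and $\circ$. Consequently the composite assignment $D\mapsto\llbracket\llbracket D\rrbracket_{WX}\rrbracket$ again turns parallel composition into a tensor product and sequential composition into a composite in $\fhilb$, just as $\llbracket\cdot\rrbracket$ does directly. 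Since every ZW-diagram is built from the listed generators using these two compositions, a straightforward induction on the structure of $D$ reduces the claim to the single-generator case: it suffices to verify $\llbracket\llbracket g\rrbracket_{WX}\rrbracket=\llbracket g\rrbracket$ for each of the generators $Z^{(n,m)}$, $R$, $\tau$, $P$, $\sigma$, $\mathbb I$, $e$, $W$, $C_a$, $C_u$.

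First I would dispose of the structural and trivial cases. For $\sigma$, $\mathbb I$, $e$, $C_a$, and $C_u$ the map $\llbracket\cdot\rrbracket_{WX}$ sends each generator to the syntactically identical ZX generator, and the two calculi assign these symbols the same Hilbert-space matrices, so the equation holds on the nose. The white spider $Z^{(n,m)}$ is sent to the ZX $Z$-spider, which carries the very same matrix $\ket 0^{\otimes m}\bra 0^{\otimes n}+\ket 1^{\otimes m}\bra 1^{\otimes n}$, so this case is immediate as well. This leaves the four genuinely computational generators $R$, $P$, $\tau$, and $W$, where one must expand the ZX image and read off its matrix.

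For the $R$-gate, writing $r=\lambda e^{i\alpha}$, its image is the $\alpha$-phase gate together with the $\lambda$-box; since both are diagonal their product is $\begin{pmatrix}1&0\\0&e^{i\alpha}\end{pmatrix}\begin{pmatrix}1&0\\0&\lambda\end{pmatrix}=\begin{pmatrix}1&0\\0&r\end{pmatrix}=\llbracket R\rrbracket$. For $P$, whose image is a $\pi$-phase $X$-spider, one checks the resulting matrix is the Pauli $X=\begin{pmatrix}0&1\\1&0\end{pmatrix}$, matching $\llbracket P\rrbracket$. For the crossing $\tau$ I would contract its ZX image and confirm it yields the signed swap recorded in the interpretation table, namely a swap carrying a $-1$ on the $\ket{11}$ component. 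The main obstacle is the $W$ generator: its ZX image involves the triangle, whose interpretation is $\begin{pmatrix}1&1\\0&1\end{pmatrix}$, so here several nonabelian pieces interact. I would substitute the triangle matrix into the diagram, contract the $Z$- and $X$-spiders and the Hadamard along the internal wires, and verify that the resulting $4\times 2$ matrix realises $\ket 0\mapsto\ket{01}+\ket{10}$ and $\ket 1\mapsto\ket{00}$, i.e.\ equals $\llbracket W\rrbracket$. This case carries essentially all the computational weight; the remainder is bookkeeping, and once every generator is verified the inductive argument of the first paragraph completes the proof.
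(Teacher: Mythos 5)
Your proposal is correct and matches the argument the paper intends: the paper dismisses this lemma with ``The proof is easy,'' which can only mean the functoriality-plus-generator-check you spell out, since both $\llbracket\cdot\rrbracket_{WX}$ and $\llbracket\cdot\rrbracket$ are defined to respect $\otimes$ and $\circ$, reducing everything to a finite matrix verification on the ten generators. Your generator-by-generator claims (in particular that the image of $W$ must realise $\ket 0\mapsto\ket{01}+\ket{10}$, $\ket 1\mapsto\ket{00}$) agree with the interpretation tables in the paper, so the proposal is a faithful filling-in of the omitted details.
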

The proof is easy.

\begin{lemma}\label{interpretationreversible}
Suppose $D$ is an arbitrary diagram in ZX-calculus. Then  $ZX\vdash \llbracket \llbracket D \rrbracket_{XW}\rrbracket_{WX} =D$.
\end{lemma}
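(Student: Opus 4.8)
\section*{Proof proposal}

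The plan is to exploit the fact that both translations $\llbracket \cdot \rrbracket_{XW}$ and $\llbracket \cdot \rrbracket_{WX}$ are defined by structural recursion, \emph{strictly} preserving both composition operations (and the compact closed structure). Consequently the round-trip $\llbracket \llbracket \cdot \rrbracket_{XW} \rrbracket_{WX}$ is itself strictly compositional: for $D = D_1 \circ D_2$ one has $\llbracket \llbracket D \rrbracket_{XW} \rrbracket_{WX} = \llbracket \llbracket D_1 \rrbracket_{XW} \rrbracket_{WX} \circ \llbracket \llbracket D_2 \rrbracket_{XW} \rrbracket_{WX}$, and likewise for $\otimes$. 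Since provable equality in the ZX-calculus is a congruence with respect to $\circ$ and $\otimes$, a routine structural induction on $D$ reduces the claim to the case where $D$ is a single generator. Thus it suffices to establish $ZX \vdash \llbracket \llbracket g \rrbracket_{XW} \rrbracket_{WX} = g$ for each generator $g$, and then let the congruence do the rest. Note that one cannot shortcut this via Lemmas \ref{xtowpreservesemantics} and \ref{wtoxpreservesemantics}, which only give semantic equality $\llbracket \llbracket \llbracket D \rrbracket_{XW} \rrbracket_{WX} \rrbracket = \llbracket D \rrbracket$; turning that into $ZX \vdash$-equality would presuppose the very completeness result this lemma is meant to serve.

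First I would dispatch the structural generators $e$, $\mathbb I$, $\sigma$, $C_a$ and $C_u$: on each of these both translations act as the identity by definition, so the round-trip returns the generator verbatim and there is nothing to prove. The Z-spider $R_Z^{(n,m)}$ is sent to the ZW Z-spider and translated straight back to the ZX Z-spider, again on the nose. For the phase gate $A$ of angle $\alpha$, its image in ZW is the $R$-gate with $r = e^{i\alpha}$, whose back-translation is the $\alpha$-phase gate composed with a $\lambda$-box of weight $\lambda = 1$; rule (L3) collapses the $\lambda = 1$ box to the identity, recovering $A$. Dually, the $\lambda$-box $L$ maps to the $R$-gate with $r = \lambda$, whose back-translation is a $\lambda$-box composed with the zero-phase spider; since the zero-phase one-to-one spider is the identity wire (a degenerate instance of spider fusion (S1)), this returns $L$.

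The genuine work lies in the two remaining generators, the Hadamard $H$ and the triangle $T$, whose ZW images are not primitives but composite diagrams; translating these composites back through $\llbracket \cdot \rrbracket_{WX}$ yields honest ZX diagrams built from spiders, phases, $\lambda$-boxes and triangles, and one must exhibit explicit ZX derivations equating them to $H$ and to $T$. For the Hadamard I expect the back-translated form to coincide, after spider fusion and the basic (S$*$) and (H$*$) rules, with the Euler decomposition of $H$, so that rule (EU) folds it into a single Hadamard box. For the triangle I expect to need the triangle rules (TR1)--(TR14) together with the $\lambda$- and addition-manipulation rules of Figure \ref{figure0}. The main obstacle is precisely these last two cases: unlike the others they are not settled by a single rule application but require a bona fide rewriting argument, the triangle being the more delicate since its ZW encoding unfolds into a comparatively large ZX diagram that must be normalised using the full strength of the triangle calculus. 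Once those two identities are in hand, the structural induction closes the proof.
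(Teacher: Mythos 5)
Your proposal is correct and follows essentially the same route as the paper: reduce to the generators by compositionality of the two translations, observe that the structural generators and the $Z$-spider round-trip on the nose, and recover the phase gate via rule (L3) applied to the $\lambda=1$ box (and the $\lambda$-box dually). The paper likewise treats the remaining Hadamard and triangle cases by direct verification (stating only that they are "easily checked"), so your flagging of those two as the only cases needing a genuine rewriting argument is consistent with, if more candid than, the paper's own proof.
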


\begin{proof}
By the construction of  $\llbracket \cdot  \rrbracket_{XW}$  and $\llbracket \cdot  \rrbracket_{WX}$, we only need to prove for $D$ as a generator of ZX-calculus.  The first six generators in ZX-calculus are the same as the  first six generators in ZW-calculus, so we just check for the last four generators in ZX-calculus.

Since 
\[
 \left\llbracket%
\beginpgfgraphicnamed{diagrams//alphagate}
}
\endpgfgraphicnamed\right\rrbracket_{XW}=  %
\beginpgfgraphicnamed{diagrams//alphagatewhite}
}
\endpgfgraphicnamed,
\]
we have 
\[
 \left\llbracket \left\llbracket%
\beginpgfgraphicnamed{diagrams//alphagate}
}
\endpgfgraphicnamed\right\rrbracket_{XW}\right\rrbracket_{WX}= \left\llbracket%
\beginpgfgraphicnamed{diagrams//alphagatewhite}
}
\endpgfgraphicnamed\right\rrbracket_{WX}=%
\beginpgfgraphicnamed{diagrams//alphagate}
}
\endpgfgraphicnamed,
\]
by the definition of  $\llbracket \cdot  \rrbracket_{WX}$ and the ZX rule (L3).
Similarly, we can easily check that 
\[
 \left\llbracket \left\llbracket%
\beginpgfgraphicnamed{diagrams//lambdabox}
}
\endpgfgraphicnamed\right\rrbracket_{XW}\right\rrbracket_{WX}=%
\beginpgfgraphicnamed{diagrams//lambdabox}
}
\endpgfgraphicnamed,\quad 
  \left\llbracket \left\llbracket%
\beginpgfgraphicnamed{diagrams//HadaDecomSingleslt}
}
\endpgfgraphicnamed\right\rrbracket_{XW}\right\rrbracket_{WX}=%
\beginpgfgraphicnamed{diagrams//HadaDecomSingleslt}
}
\endpgfgraphicnamed,\quad 
   \left\llbracket \left\llbracket%
\beginpgfgraphicnamed{diagrams//triangle}
}
\endpgfgraphicnamed\right\rrbracket_{XW}\right\rrbracket_{WX}=%
\beginpgfgraphicnamed{diagrams//triangle}
}
\endpgfgraphicnamed.
\]

\end{proof}

\section{Completeness}
\begin{proposition}\label{zwrulesholdinzx}
If  $ZW\vdash D_1=D_2$, then  $ZX\vdash \llbracket D_1 \rrbracket_{WX} =\llbracket D_2 \rrbracket_{WX}$.

\end{proposition}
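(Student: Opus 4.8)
The plan is to argue by induction on the structure of the ZW derivation witnessing $ZW \vdash D_1 = D_2$. Recall that provable equality in $\mathfrak{F}$ is the smallest congruence (with respect to both $\otimes$ and $\circ$) that contains the compact-closed structural equations together with the rewriting rules of Figures \ref{figure3}--\ref{figure6}. Since the translation $\llbracket \cdot \rrbracket_{WX}$ is defined to commute with $\otimes$ and $\circ$, it is a (strict monoidal) functor on diagrams; hence it suffices to prove two things: first, that each \emph{generating} rewriting equation of the ZW-calculus is sent to a ZX-provable equation, and second, that the inductive closure of these under reflexivity, symmetry, transitivity and congruence is respected.

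The structural and congruence steps are immediate and carry no real content. If a ZW proof combines subproofs of $A = B$ and $C = D$ by parallel or sequential composition, then, because $\llbracket A \otimes C \rrbracket_{WX} = \llbracket A \rrbracket_{WX} \otimes \llbracket C \rrbracket_{WX}$ and likewise for $\circ$, the induction hypothesis yields $ZX \vdash \llbracket A \rrbracket_{WX} = \llbracket B \rrbracket_{WX}$ and $ZX \vdash \llbracket C \rrbracket_{WX} = \llbracket D \rrbracket_{WX}$, which recombine in ZX to the desired equation. The compact-closed axioms of $\mathfrak{F}$ (the yanking/snake equations, naturality of the symmetry, and so on) are handled by observing that the generators $\sigma$, $\mathbb I$, $C_a$, $C_u$ and $e$ are each sent to their ZX namesakes, so every such axiom of ZW is literally a compact-closed axiom of ZX and holds on the nose.

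The substance of the proof is therefore a finite case analysis over the rewriting rules. For each rule $L = R$ in Figures \ref{figure3}--\ref{figure6} I would exhibit an explicit ZX derivation of $\llbracket L \rrbracket_{WX} = \llbracket R \rrbracket_{WX}$, drawing on the spider and Hadamard rules of Figure \ref{figure1}, the $\lambda$ and addition rules of Figure \ref{figure0}, and above all the triangle rules (TR1)--(TR14) of Figure \ref{figure2}. Every such target equation is semantically valid (by Lemma \ref{wtoxpreservesemantics} together with soundness of the standard interpretation), which serves as a useful consistency check but is not a substitute for the syntactic derivation that completeness demands. The rules not involving the $W$ node or the $\tau$ crossing are comparatively light, since the $Z$-spider translates to the ZX $Z$-spider and the $R$-gate to a phase composed with a $\lambda$-box, so these reduce to spider fusion and the $\lambda$-calculus of Figure \ref{figure0}.

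The hard part will be the rules governing $W$ and $\tau$. Under $\llbracket \cdot \rrbracket_{WX}$ the $W$ generator unfolds into a triangle-laden ZX diagram and $\tau$ into the mixed $Z/X$ crossing, so the $W$-associativity and $W$-$\tau$ naturality equations, together with the ring and antipode rules of Figure \ref{figure4}, will require the longest derivations and lean most heavily on (TR1)--(TR14). I expect these $W$-centred equations — especially those expressing the interaction of $W$ with the crossing $\tau$ and with the $R$-gate — to be the principal obstacle, and indeed the reason the triangle axioms were introduced in the first place.
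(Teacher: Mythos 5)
Your proposal takes essentially the same route as the paper: reduce the claim, via the fact that $\llbracket \cdot \rrbracket_{WX}$ commutes with $\otimes$ and $\circ$ and sends the structural generators to their ZX namesakes, to verifying that each individual ZW rewriting rule translates to a ZX-derivable equation. The paper's proof body does exactly this reduction and delegates the rule-by-rule derivations to the Appendix; your plan correctly identifies that this finite case analysis (in particular the $W$- and $\tau$-centred rules, which indeed consume the triangle axioms (TR1)--(TR14)) is where all the real work lies, though like the proposition's own one-line proof it does not itself supply those derivations.
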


\begin{proof}
Here we need only to prove that $ZX \vdash \left\llbracket D_1\right\rrbracket_{WX} = \left\llbracket D_2\right\rrbracket_{WX}$ where  $D_1=D_2$ is a rewriting rule of ZW-calculus. The whole of Appendix is devoted to prove this proposition.
\end{proof}

\begin{theorem}\label{maintheorem}
The ZX-calculus is complete for universal pure qubit quantum mechanics:
If $\llbracket D_1 \rrbracket =\llbracket D_2 \rrbracket$, then $ZX\vdash D_1=D_2$,

\end{theorem}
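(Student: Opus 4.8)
The plan is to transport completeness from the ZW-calculus, whose completeness for pure qubit quantum mechanics is established in \cite{amar}, across the two translations $\llbracket \cdot \rrbracket_{XW}$ and $\llbracket \cdot \rrbracket_{WX}$ assembled in the previous section. Suppose $D_1, D_2$ are ZX diagrams with $\llbracket D_1 \rrbracket = \llbracket D_2 \rrbracket$. First I would push both diagrams into the ZW-calculus and compare their standard interpretations there: by Lemma \ref{xtowpreservesemantics} the translation $\llbracket \cdot \rrbracket_{XW}$ preserves semantics, so $\llbracket \llbracket D_1 \rrbracket_{XW} \rrbracket = \llbracket D_1 \rrbracket = \llbracket D_2 \rrbracket = \llbracket \llbracket D_2 \rrbracket_{XW} \rrbracket$. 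Thus the two translated ZW diagrams denote the same linear map.

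Next I would invoke completeness of the ZW-calculus: since the ZW diagrams $\llbracket D_1 \rrbracket_{XW}$ and $\llbracket D_2 \rrbracket_{XW}$ have equal interpretations, completeness yields $ZW \vdash \llbracket D_1 \rrbracket_{XW} = \llbracket D_2 \rrbracket_{XW}$. I would then translate this ZW derivation back into ZX using Proposition \ref{zwrulesholdinzx}, which guarantees that any ZW-provable equality becomes ZX-provable after applying $\llbracket \cdot \rrbracket_{WX}$; hence $ZX \vdash \llbracket \llbracket D_1 \rrbracket_{XW} \rrbracket_{WX} = \llbracket \llbracket D_2 \rrbracket_{XW} \rrbracket_{WX}$.

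Finally I would close the round trip with Lemma \ref{interpretationreversible}, which states that $ZX \vdash \llbracket \llbracket D_i \rrbracket_{XW} \rrbracket_{WX} = D_i$ for each $i$. Chaining the three equalities inside ZX then gives $ZX \vdash D_1 = \llbracket \llbracket D_1 \rrbracket_{XW} \rrbracket_{WX} = \llbracket \llbracket D_2 \rrbracket_{XW} \rrbracket_{WX} = D_2$, which is exactly the claimed completeness. Note that nothing here uses the extra triangle or $\lambda$-box generators essentially, since Lemma \ref{lem:lamb_tri_decomposition} shows they are already expressible in Z and X phases, so the conclusion is a statement about the core ZX-calculus.

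Given the lemmas and proposition already in place, the theorem itself is a short diagram chase, and I expect no obstacle in its assembly. The genuine difficulty is concentrated entirely in Proposition \ref{zwrulesholdinzx}: verifying that \emph{every} rewriting rule of the ZW-calculus is derivable in ZX once translated through $\llbracket \cdot \rrbracket_{WX}$. That is the step I expect to be the main obstacle, since it requires reproving each ZW axiom from the ZX rules, including the $W$-generator together with its naturality and Hopf laws, and is presumably where the extended rules for the triangle and $\lambda$-box of Figures \ref{figure0} and \ref{figure2} do the heavy lifting.
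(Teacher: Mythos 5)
Your proposal is correct and follows exactly the same route as the paper: preserve semantics into ZW via Lemma \ref{xtowpreservesemantics}, apply ZW completeness, pull the derivation back with Proposition \ref{zwrulesholdinzx}, and close the loop with Lemma \ref{interpretationreversible}. Your assessment that the real work lives in Proposition \ref{zwrulesholdinzx} also matches the paper, whose entire appendix is devoted to that verification.
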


\begin{proof}
Suppose $D_1,  D_2 \in ZX$ and  $\llbracket D_1 \rrbracket =\llbracket D_2 \rrbracket$. Then by lemma \ref{xtowpreservesemantics},  $\llbracket \llbracket D_1 \rrbracket_{XW}\rrbracket = \llbracket D_1 \rrbracket= \llbracket D_2 \rrbracket=\llbracket \llbracket D_2 \rrbracket_{XW}\rrbracket $.  Thus by the completeness of ZW-calculus \cite{amar},  $ZW\vdash \llbracket D_2 \rrbracket_{XW}=  \llbracket D_2 \rrbracket_{XW}$.  Now by proposition \ref{zwrulesholdinzx},  $ZX\vdash \llbracket \llbracket D_1 \rrbracket_{XW}\rrbracket_{WX} =\llbracket \llbracket D_2 \rrbracket_{XW}\rrbracket_{WX}$.
Finally, by lemma \ref{interpretationreversible},  $D_1=D_2$.
\end{proof}


\section{Conclusion and further work}

In this paper, we  show that  the ZX-calculus is complete for the universal pure qubit QM, with the aid of completeness of ZW-calculus for the whole qubit QM. 

There are several questions for the next step. Firstly, can we derive the completeness of ZX-calculus for Clifford+T QM from the universal completeness? Secondly, can we obtain the completeness of ZX-calculus for stabilizer QM from the universal completeness?  Thirdly, 
can we generalise the completeness result to qudit ZX-calculus for arbitrary dimension $d$? Furthermore, can we have a proof of completeness that is independent of the ZW-calculus?

It is also interesting to incorporate the rules of the universally complete  ZX-calculus in the automated graph rewriting system Quantomatic  \cite{Quanto}.

\section*{Acknowledgement}
 The authors would like to thank  Bob Coecke and Amar Hadzihasanovic for the fruitful discussions and invaluable comments.

\section*{Appendix}
\begin{lemma}~\newline
	\label{lem:6b}
	$

	~\right\rrbracket_{WX}
	$
\end{proposition}

\begin{proof}~\newline
	Proof in \cite{Emmanuel}, proposition 7 part $1b$. The rules used are $\ref{lem:inv}$, $S1$, $H2$, $H$, $S2$, $TR7$, $\ref{lem:hopf}$, $TR2$, $\ref{lem:3}$.
\end{proof}
\begin{proposition}(ZW rule $nat^w_w$)~\newline\\
	$
	ZX\vdash
	\left\llbracket~
	\input{Def_5.4/natww/natww_LHS.tikz}
	~\right\rrbracket_{WX}
	=
	\left\llbracket~
	\input{Def_5.4/natww/natww_RHS.tikz}
	~\right\rrbracket_{WX}
	$
\end{proposition}

\begin{proof}
	Claim:\\
	\input{Def_5.4/natww/natww_claim.tikz}\\
	Proof of claim:\\
	$`\implies'$:\\
	\input{Def_5.4/natww/natww_claimproof.tikz}\\
	$`\impliedby'$:\\
	\input{Def_5.4/natww/natww_claimproof2.tikz}\\
	Hence we only have to prove\\
		$
	ZX\vdash
	\left\llbracket~
	\input{Def_5.4/natww/natww_simon_LHS.tikz}
	~\right\rrbracket_{WX}
	=
	\left\llbracket~
	\input{Def_5.4/natww/natww_simon_RHS.tikz}
	~\right\rrbracket_{WX}
	$
	\\
	which is proven in \cite{Emmanuel}, proposition 7 part $1a$. The rules used are $S1$, $B2$, $\ref{lem:4}$, $\ref{lem:5}$, $TR8$.
\end{proof}
\begin{proposition}(ZW rule $nat^w_x$)~\newline\\
	$
	ZX\vdash
	\left\llbracket~
	\begin{tikzpicture}
	\begin{pgfonlayer}{nodelayer}
		\node [style=none] (0) at (-0.5, -0) {};
		\node [style=dn] (1) at (-0.5, -0.375) {};
		\node [style=bn] (2) at (0, 0.75) {};
		\node [style=none] (3) at (0.7500001, 0.25) {};
		\node [style=none] (4) at (-0.7500001, -0.7499999) {};
		\node [style=bn] (5) at (0, 0.5) {};
		\node [style=none] (6) at (0.7500001, 1) {};
		\node [style=none] (7) at (0.5000002, -0) {};
		\node [style=none] (8) at (-0.5000001, -1) {};
		\node [style=none] (9) at (0, 1) {};
		\node [style=dn] (10) at (0.5000002, -0.125) {};
		\node [style=none] (11) at (0.5000001, -1) {};
		\node [style=none] (12) at (-0.7499999, -1) {};
	\end{pgfonlayer}
	\begin{pgfonlayer}{edgelayer}
		\draw (8.center) to (0.center);
		\draw (11.center) to (7.center);
		\draw [in=-165, out=90, looseness=0.75] (0.center) to (5);
		\draw [in=90, out=-15, looseness=0.75] (5) to (7.center);
		\draw (5) to (2);
		\draw (2) to (9.center);
		\draw [in=-90, out=90, looseness=1.00] (4.center) to (3.center);
		\draw (3.center) to (6.center);
		\draw (4.center) to (12.center);
	\end{pgfonlayer}
\end{tikzpicture}
	~\right\rrbracket_{WX}
	=
	\left\llbracket~
	\begin{tikzpicture}
	\begin{pgfonlayer}{nodelayer}
		\node [style=none] (0) at (-0.5000001, -1) {};
		\node [style=none] (1) at (0, 1) {};
		\node [style=none] (2) at (-0.7499999, -0.7500001) {};
		\node [style=bn] (3) at (0, -0) {};
		\node [style=bn] (4) at (0, -0.2499999) {};
		\node [style=none] (5) at (0.7500002, 1) {};
		\node [style=dn] (6) at (0, 0.5) {};
		\node [style=none] (7) at (0.5000001, -1) {};
		\node [style=none] (8) at (-0.7499999, -0) {};
		\node [style=none] (9) at (-0.7499999, -1) {};
	\end{pgfonlayer}
	\begin{pgfonlayer}{edgelayer}
		\draw [in=90, out=-165, looseness=0.75] (4) to (0.center);
		\draw [in=90, out=-15, looseness=0.75] (4) to (7.center);
		\draw (4) to (3);
		\draw (3) to (1.center);
		\draw [in=90, out=-90, looseness=1.25] (5.center) to (8.center);
		\draw (8.center) to (2.center);
		\draw (2.center) to (9.center);
	\end{pgfonlayer}
\end{tikzpicture}
	~\right\rrbracket_{WX}
	$
\end{proposition}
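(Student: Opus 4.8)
The plan is to follow the same template as the surrounding naturality propositions (and in particular to mimic the structure of Proposition~\ref{prop:natnx}): apply the translation $\llbracket\cdot\rrbracket_{WX}$ to both diagrams, replacing the $W$ generator by its ZX-image $\llbracket W\rrbracket_{WX}$ and each crossing $\tau$ by its ZX-image $\llbracket \tau\rrbracket_{WX}$, and then establish the resulting equality entirely within the ZX-calculus. Since $\llbracket\cdot\rrbracket_{WX}$ is defined functorially on $\otimes$ and $\circ$, and since the two sides differ only in the position of the $W$ node relative to the braiding, the statement reduces to a single local identity describing how the ZX-image of $W$ slides through the ZX-image of one crossing. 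The conceptual reason the identity holds is semantic: $\llbracket W\rrbracket$ has vanishing $\ket{11}$ output component, so the lone $-1$ entry of $\llbracket\tau\rrbracket$ never acts on the image of $W$; diagrammatically this means the $\pi$ carried by $\llbracket\tau\rrbracket_{WX}$ must be annihilated once it meets $\llbracket W\rrbracket_{WX}$, after which $\tau$ collapses to the ordinary swap $\sigma$ and the claim is just naturality of $\sigma$.

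Concretely, I would first translate and cancel the common plain-wire portions so that only the interaction between $\llbracket W\rrbracket_{WX}$ and $\llbracket\tau\rrbracket_{WX}$ remains; this is the analogue of the ``Claim'' step in the $nat^w_w$ proof, where $W$ is put into a convenient normal form before the braiding is moved. Writing $\llbracket\tau\rrbracket_{WX}$ as a plain swap together with a red $\pi$ (its $-1$ eigenvalue), the main work is to commute that $\pi$ into the branching structure of $\llbracket W\rrbracket_{WX}$. I expect to use the $\pi$-commutation and copy rules for the triangle, namely (TR1) together with (TR2)--(TR4), and the bialgebra rules (B1), (B2), discharging any resulting $\pi$--$\pi$ pairs by (S1) and clearing spurious loops via the Hopf law (Lemma~\ref{lem:hopf}) and the identity of Lemma~\ref{lem:inv}. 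Once the $\pi$ has been pushed onto the $\ket{11}$ branch of $W$ it can be eliminated, and the leftover diagram is the plain-swap version of $W$, for which naturality is structural.

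The hard part will be the bookkeeping during this $\pi$-elimination: a naive copy of the red $\pi$ through $\llbracket W\rrbracket_{WX}$ produces a $\pi$ on each of the two output legs together with sign/scalar corrections, and I must check that these recombine exactly into the single $\pi$ that appears on the other side of $\llbracket\tau\rrbracket_{WX}$, leaving no residual phase or scalar. I would therefore carry out the commutation in small, individually labelled rewrite steps, verifying after each triangle- and bialgebra-rule application that no extraneous phase is introduced, and appeal to Lemma~\ref{lem:3} to absorb the final triangle--$\pi$ correction. The remaining ingredients---naturality of the plain swap and the routine rearrangement of caps and cups---are straightforward and can be left to the reader.
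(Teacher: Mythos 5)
There is a genuine problem with the core of your plan. You propose to write $\llbracket\tau\rrbracket_{WX}$ as ``a plain swap together with a red $\pi$'' and then push that $\pi$ through the image of $W$ using the triangle $\pi$-copy rules (TR1)--(TR4). But the $-1$ entry of $\llbracket\tau\rrbracket$ sits on the $\ket{11}$ component of the \emph{two} crossed wires: $\tau=\mathrm{SWAP}\circ\mathrm{diag}(1,1,1,-1)$ is a swap composed with a controlled-$Z$, and $\mathrm{diag}(1,1,1,-1)$ is not expressible as a one-wire $\pi$ phase tensored with an identity (a red $\pi$ is the unitary $X$, a green $\pi$ is $Z=\mathrm{diag}(1,-1)$; neither, tensored with $1$, gives the CZ). In the translation $\llbracket\tau\rrbracket_{WX}$ the $-1$ therefore appears as an $H$-edge joining two green dots, not as a lone $\pi$ node, so the $\pi$-commutation strategy never gets started. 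Your semantic justification is also off: the observation that $\llbracket W\rrbracket$ has no $\ket{11}$ output component would be relevant if the two output legs of $W$ were crossed with \emph{each other}, but here each output leg is crossed with a third wire; for input $\ket{0}$ the output $\ket{01}+\ket{10}$ has exactly one leg in state $\ket{1}$, so exactly one of the two crossings does contribute a $-1$, and the content of the rule is that this sign is reproduced by the single crossing on the other side --- $\tau$ does not collapse to the plain symmetry $\sigma$.

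The proof the paper actually relies on (citing \cite{Emmanuel}, Proposition~7, part~7a) keeps the CZ form of the crossing and uses only $H$, Lemma~\ref{lem:1} (which is precisely the statement that the green--$H$--green CZ commutes with the swap), $S1$, and the bialgebra rule $B2$ to copy the green spider of $\llbracket W\rrbracket_{WX}$ through the $H$-edges; no triangle rules, Hopf law, or $\pi$-elimination are needed. If you want to repair your argument, replace the ``swap plus $\pi$'' decomposition by the green--$H$--green decomposition of the crossing and redo the sliding step with $B2$ and Lemma~\ref{lem:1}.
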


\begin{proof}
	Proof in \cite{Emmanuel}, proposition 7 part $7a$. The rules used are $H$, $\ref{lem:1}$, $S1$, $B2$.
\end{proof}
\begin{proposition}(ZW rule $com^{co}_w$)~\newline \\
	$
	ZX\vdash
	\left\llbracket~
	\begin{tikzpicture}
	\begin{pgfonlayer}{nodelayer}
		\node [style=none] (0) at (-0.5, -0) {};
		\node [style=none] (1) at (-0.5, -1) {};
		\node [style=bn] (2) at (0, 0.75) {};
		\node [style=none] (3) at (0, 1) {};
		\node [style=bn] (4) at (0, 0.5) {};
		\node [style=dn] (5) at (0, -0.5) {};
		\node [style=none] (6) at (0.5, -1) {};
		\node [style=none] (7) at (0.5, -0) {};
	\end{pgfonlayer}
	\begin{pgfonlayer}{edgelayer}
		\draw [in=-165, out=90, looseness=0.75] (0.center) to (4);
		\draw [in=90, out=-15, looseness=0.75] (4) to (7.center);
		\draw (4) to (2);
		\draw (2) to (3.center);
		\draw [in=-90, out=90, looseness=1.00] (1.center) to (7.center);
		\draw [in=90, out=-90, looseness=1.00] (0.center) to (6.center);
	\end{pgfonlayer}
\end{tikzpicture}
	~\right\rrbracket_{WX}
	=
	\left\llbracket~
	\begin{tikzpicture}
	\begin{pgfonlayer}{nodelayer}
		\node [style=none] (0) at (0, 1) {};
		\node [style=bn] (1) at (0, 0.75) {};
		\node [style=none] (2) at (0.5, -1) {};
		\node [style=bn] (3) at (0, 0.5) {};
		\node [style=none] (4) at (0.5, -0) {};
		\node [style=none] (5) at (-0.5, -1) {};
		\node [style=none] (6) at (-0.5, -0) {};
	\end{pgfonlayer}
	\begin{pgfonlayer}{edgelayer}
		\draw [in=-165, out=90, looseness=0.75] (6.center) to (3);
		\draw [in=90, out=-15, looseness=0.75] (3) to (4.center);
		\draw (3) to (1);
		\draw (1) to (0.center);
		\draw (5.center) to (6.center);
		\draw (4.center) to (2.center);
	\end{pgfonlayer}
\end{tikzpicture}
	~\right\rrbracket_{WX}
	$
\end{proposition}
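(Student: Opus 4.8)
The plan is to reduce the claim to a purely \emph{local} statement about the ZX-translation of the single generator $W$. The two diagrams differ only in that the left one post-composes the two outputs of the black node $W$ with a crossing, while the right one does not; reading top-to-bottom this crossing is the ZW swap $\sigma$ applied below $W$. Since $\llbracket\cdot\rrbracket_{WX}$ is a monoidal functor and the swap is sent to the swap, functoriality gives
\[
\left\llbracket \sigma\circ W\right\rrbracket_{WX}=\left\llbracket\sigma\right\rrbracket_{WX}\circ\left\llbracket W\right\rrbracket_{WX}=\sigma\circ\left\llbracket W\right\rrbracket_{WX}.
\]
Thus it suffices to prove $ZX\vdash\sigma\circ\llbracket W\rrbracket_{WX}=\llbracket W\rrbracket_{WX}$; that is, that the ZX-image \tikzfig{diagrams//winzx} of $W$ is invariant under exchanging its two output legs. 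I would note at the outset that this equality is forced to be true: by Lemma~\ref{wtoxpreservesemantics} the interpretation of \tikzfig{diagrams//winzx} is exactly $\llbracket W\rrbracket$, which sends $\ket 0\mapsto\ket{01}+\ket{10}$ and $\ket 1\mapsto\ket{00}$, both symmetric under swapping the two outputs. So the content lies entirely in producing a \emph{diagrammatic} derivation, not in establishing the semantic fact.

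The strategy is then to display \tikzfig{diagrams//winzx} explicitly and to locate a \emph{symmetric core}: the two output wires should both attach, possibly through decorations, to a single green $Z$-spider. Wherever the two legs meet such a spider symmetrically, the crossing can be absorbed using the spider rule $(S1)$ together with the compact-structure rule $(S3)$, since permuting the legs of a $Z$-spider leaves it unchanged. If \tikzfig{diagrams//winzx} happens to be drawn symmetrically about its vertical axis, the claim is immediate from the compact-closed axioms and the proof collapses to a single application of $(S3)$.

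The main obstacle I anticipate is any decoration sitting on only one of the two legs, most likely a triangle or a Hadamard arising from the definition of $\llbracket W\rrbracket_{WX}$, which breaks the manifest symmetry. To deal with this I would push such decorations through the central spider using the triangle rules of Figure~\ref{figure2}, for instance the copy rules $(TR2)$--$(TR4)$ and the pairing identities $(TR7)$ together with Lemma~\ref{lem:4} and Lemma~\ref{lem:5}, and where a Hopf-style cancellation is needed I would invoke Lemma~\ref{lem:hopf}. The goal of this bookkeeping is to bring the picture into a form in which both legs carry identical data, after which $(S1)$ closes the argument. The genuine work, and the step most likely to be delicate, is verifying that the asymmetric triangle decorations on the two legs can indeed be symmetrised by these rules rather than relying on the semantic guarantee alone.
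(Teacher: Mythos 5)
Your opening reduction is fine and agrees with what the paper implicitly does: since $\llbracket\cdot\rrbracket_{WX}$ is monoidal and sends the swap to the swap, the claim is exactly the ZX equality $\sigma\circ\llbracket W\rrbracket_{WX}=\llbracket W\rrbracket_{WX}$. But from that point on you have written a plan, not a proof: the entire content of the proposition is the diagrammatic derivation of that equality, and you defer it, saying yourself that ``the genuine work, and the step most likely to be delicate'' is the symmetrisation step. That step is precisely what the paper supplies, as a chain of rewrites using Lemma~\ref{lem:1}, $B2$, $S1$, Lemma~\ref{lem:6}, $TR5$, Lemma~\ref{lem:com} and Lemma~\ref{lem:4}; without an analogous chain your argument does not establish anything beyond the (inadmissible) semantic observation.

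Moreover, your sketch misreads where the asymmetry sits, so the tools you propose are not obviously the right ones. The ZX image of $W$ does not consist of two legs hanging off one green spider with a decoration on one leg; it contains an $H$-box \emph{bridging} the two output wires (a green--$H$--green, i.e.\ controlled-$Z$-like, gadget) together with a triangle and $\pi$-phases concentrated on one branch. The crossing is not absorbed by spider fusion $(S1)$/$(S3)$ alone: the key move is that the green--$H$--green gadget commutes with the swap, which is Lemma~\ref{lem:1}, a lemma you never invoke. After that, the paper still needs $B2$, $TR5$, Lemma~\ref{lem:6} and Lemmas~\ref{lem:com}, \ref{lem:4} to normalise both sides to the same diagram. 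Your proposed toolbox ($TR2$--$TR4$, $TR7$, Lemmas~\ref{lem:4}, \ref{lem:5}, \ref{lem:hopf}) is a different collection, and you give no evidence that it suffices; as it stands the proof has a hole exactly at the step you flag as delicate.
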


\begin{proof}~\newline
	\begin{tikzpicture}
	\begin{pgfonlayer}{nodelayer}
		\node [style=none] (0) at (4, 12.75) {};
		\node [style=rn] (1) at (4, 12.25) {$\pi$};
		\node [style=rn] (2) at (4, 11.75) {$\pi$};
		\node [style=gn] (3) at (4, 11.25) {};
		\node [style=rn] (4) at (3, 10.25) {};
		\node [style=gn] (5) at (4, 10.25) {};
		\node [style=gn] (6) at (3, 8.75) {};
		\node [style=gn] (7) at (4, 8.75) {};
		\node [style=none] (8) at (3, 9.75) {};
		\node [style=none] (9) at (4, 9.75) {};
		\node [style={H box}] (10) at (3.5, 8.75) {$H$};
		\node [style=triangle] (11) at (4, 10.75) {};
		\node [style=none] (12) at (3, 8.25) {};
		\node [style=none] (13) at (4, 8.25) {};
		\node [style=gn] (14) at (4.5, 11) {};
		\node [style=gn] (15) at (4.5, 10) {};
		\node [style=rn] (16) at (4.5, 10.5) {};
		\node [style=rn] (17) at (4.5, 9.5) {};
		\node [style=gn] (18) at (6, 9.25) {};
		\node [style=rn] (19) at (6, 9.75) {};
		\node [style=none] (20) at (7, 11.5) {};
		\node [style=gn] (21) at (7.5, 11) {};
		\node [style=gn] (22) at (7, 9.75) {};
		\node [style=gn] (23) at (7.5, 10) {};
		\node [style={H box}] (24) at (6.5, 9.25) {$H$};
		\node [style=gn] (25) at (7, 9.25) {};
		\node [style=gn] (26) at (7, 10.75) {};
		\node [style=rn] (27) at (7.5, 10.5) {};
		\node [style=triangle] (28) at (7, 10.25) {};
		\node [style=rn] (29) at (7.5, 9.5) {};
		\node [style=none] (30) at (6, 8.25) {};
		\node [style=none] (31) at (7, 8.25) {};
		\node [style=none] (32) at (5.25, 10.25) {$\overset{\ref{lem:1}}{=}$};
		\node [style=rn] (33) at (11, 9.5) {};
		\node [style={H box}] (34) at (10, 9.25) {$H$};
		\node [style=none] (35) at (10.5, 8.25) {};
		\node [style=rn] (36) at (11, 10.5) {};
		\node [style=gn] (37) at (11, 11) {};
		\node [style=gn] (38) at (11, 10) {};
		\node [style=gn] (39) at (10.5, 11) {};
		\node [style=triangle] (40) at (10.5, 10.25) {};
		\node [style=gn] (41) at (10.5, 9.25) {};
		\node [style=none] (42) at (9.5, 8.25) {};
		\node [style=none] (43) at (10.5, 11.5) {};
		\node [style=gn] (44) at (10.5, 9.75) {};
		\node [style=gn] (45) at (9, 9.75) {};
		\node [style=gn] (46) at (9.5, 9.75) {};
		\node [style=rn] (47) at (9, 9.25) {};
		\node [style=rn] (48) at (9.5, 9.25) {};
		\node [style=none] (49) at (8.249999, 10.25) {$\overset{B2}{=}$};
		\node [style=gn] (50) at (11, 9) {};
		\node [style=rn] (51) at (11, 8.5) {};
		\node [style=none] (52) at (2.25, 7.25) {};
		\node [style=gn] (53) at (2.75, 4.75) {};
		\node [style=none] (54) at (1.25, 4) {};
		\node [style=gn] (55) at (2.75, 6.75) {};
		\node [style=rn] (56) at (2.75, 5.25) {};
		\node [style=gn] (57) at (2.75, 5.75) {};
		\node [style=triangle] (58) at (2.25, 6) {};
		\node [style=gn] (59) at (2.25, 5.5) {};
		\node [style=rn] (60) at (2.75, 6.25) {};
		\node [style=none] (61) at (2.25, 4) {};
		\node [style=rn] (62) at (2.75, 4.25) {};
		\node [style=gn] (63) at (2.25, 6.75) {};
		\node [style={H box}] (64) at (1.75, 5.25) {$H$};
		\node [style=rn] (65) at (1.25, 5.5) {};
		\node [style=gn] (66) at (0.7500001, 5.5) {};
		\node [style=rn] (67) at (0.7500001, 4.5) {};
		\node [style=none] (68) at (0, 5.75) {$\overset{S1}{=}$};
		\node [style=none] (69) at (5.25, 4) {};
		\node [style={H box}] (70) at (4.75, 5.25) {$H$};
		\node [style=triangle] (71) at (5.25, 5.75) {};
		\node [style=gn] (72) at (5.75, 6.5) {};
		\node [style=gn] (73) at (5.75, 5.5) {};
		\node [style=gn] (74) at (5.25, 5.25) {$\pi$};
		\node [style=rn] (75) at (4.25, 4.75) {};
		\node [style=none] (76) at (5.25, 7) {};
		\node [style=gn] (77) at (5.25, 6.5) {};
		\node [style=rn] (78) at (5.75, 5) {};
		\node [style=none] (79) at (3.5, 5.75) {$\overset{\ref{lem:6}}{=}$};
		\node [style=rn] (80) at (5.75, 6) {};
		\node [style=gn] (81) at (4.25, 5.25) {};
		\node [style=none] (82) at (4.25, 4) {};
		\node [style=gn] (83) at (8.000001, 5.25) {};
		\node [style=gn] (84) at (8.5, 6.5) {};
		\node [style=gn] (85) at (8.000001, 6.5) {};
		\node [style=rn] (86) at (8.5, 6) {};
		\node [style=triangle] (87) at (8.000001, 5.75) {};
		\node [style=none] (88) at (8.000001, 7) {};
		\node [style=rn] (89) at (7.25, 5.25) {};
		\node [style=none] (90) at (8.000001, 4) {};
		\node [style=none] (91) at (7.25, 4) {};
		\node [style=none] (92) at (6.5, 5.75) {$\overset{S1, TR5,}{=}$};
		\node [style=none] (93) at (1.5, 9.75) {};
		\node [style=bn] (94) at (0.7500001, 10.5) {};
		\node [style=none] (95) at (0, 8.25) {};
		\node [style=none] (96) at (0, 9.75) {};
		\node [style=bn] (97) at (0.7500001, 11) {};
		\node [style=none] (98) at (0.7500001, 11.25) {};
		\node [style=dn] (99) at (0.7500001, 9) {};
		\node [style=none] (100) at (1.5, 8.25) {};
		\node [style=none] (101) at (2.25, 10.25) {$\mapsto$};
		\node [style=none] (102) at (0.7500001, 1.25) {};
		\node [style=none] (103) at (2.25, -0) {};
		\node [style=none] (104) at (1.5, 3) {};
		\node [style=bn] (105) at (1.5, 2) {};
		\node [style=bn] (106) at (1.5, 2.5) {};
		\node [style=none] (107) at (2.25, 1.25) {};
		\node [style=none] (108) at (0.7500001, -0) {};
		\node [style=none] (109) at (0, 1.5) {$\mapsfrom$};
		\node [style=rn] (110) at (10, 5.25) {};
		\node [style=gn] (111) at (11.25, 6.5) {};
		\node [style=none] (112) at (10, 4) {};
		\node [style=none] (113) at (10.75, 4) {};
		\node [style=gn] (114) at (10.75, 6.5) {};
		\node [style=none] (115) at (10.75, 7) {};
		\node [style=gn] (116) at (10.75, 5.25) {};
		\node [style=rn] (117) at (11.25, 6) {};
		\node [style=triangle] (118) at (10.75, 5.75) {};
		\node [style=none] (119) at (9.25, 5.75) {$\overset{\ref{lem:com}, \ref{lem:4}}{=}$};
	\end{pgfonlayer}
	\begin{pgfonlayer}{edgelayer}
		\draw (0.center) to (1);
		\draw (1) to (2);
		\draw (2) to (3);
		\draw (3) to (11);
		\draw (11) to (5);
		\draw (5) to (9.center);
		\draw (4) to (8.center);
		\draw [in=90, out=-90, looseness=1.00] (8.center) to (7);
		\draw [in=-90, out=90, looseness=1.00] (6) to (9.center);
		\draw (6) to (12.center);
		\draw (7) to (13.center);
		\draw (6) to (10);
		\draw (10) to (7);
		\draw (4) to (5);
		\draw (14) to (16);
		\draw (15) to (17);
		\draw (26) to (28);
		\draw (28) to (22);
		\draw (18) to (24);
		\draw (24) to (25);
		\draw (19) to (22);
		\draw (21) to (27);
		\draw (23) to (29);
		\draw (18) to (19);
		\draw (25) to (22);
		\draw [in=-90, out=90, looseness=1.00] (30.center) to (25);
		\draw [in=-90, out=90, looseness=1.00] (31.center) to (18);
		\draw (39) to (40);
		\draw (40) to (44);
		\draw (34) to (41);
		\draw (37) to (36);
		\draw (38) to (33);
		\draw (41) to (44);
		\draw [in=-90, out=90, looseness=1.00] (42.center) to (41);
		\draw (45) to (47);
		\draw (45) to (48);
		\draw (47) to (46);
		\draw (48) to (46);
		\draw (4) to (3);
		\draw (19) to (26);
		\draw (45) to (39);
		\draw (46) to (44);
		\draw (48) to (34);
		\draw [in=90, out=-90, looseness=0.75] (47) to (35.center);
		\draw (50) to (51);
		\draw (63) to (58);
		\draw (64) to (59);
		\draw (55) to (60);
		\draw (57) to (56);
		\draw [in=-90, out=90, looseness=0.75] (54.center) to (59);
		\draw (53) to (62);
		\draw (64) to (65);
		\draw [bend left, looseness=1.00] (65) to (59);
		\draw (66) to (63);
		\draw (63) to (52.center);
		\draw (26) to (20.center);
		\draw (39) to (43.center);
		\draw (66) to (65);
		\draw (67) to (66);
		\draw [bend right, looseness=0.75] (67) to (59);
		\draw [in=90, out=-18, looseness=0.75] (67) to (61.center);
		\draw (58) to (59);
		\draw (77) to (71);
		\draw (70) to (74);
		\draw (72) to (80);
		\draw (73) to (78);
		\draw [in=-90, out=90, looseness=0.75] (82.center) to (74);
		\draw (81) to (77);
		\draw (77) to (76.center);
		\draw (75) to (81);
		\draw [bend right, looseness=0.75] (75) to (74);
		\draw [in=90, out=-18, looseness=0.75] (75) to (69.center);
		\draw (71) to (74);
		\draw (70) to (81);
		\draw (85) to (87);
		\draw (84) to (86);
		\draw (85) to (88.center);
		\draw (87) to (83);
		\draw (85) to (89);
		\draw (89) to (83);
		\draw [in=90, out=-90, looseness=1.00] (83) to (91.center);
		\draw [in=90, out=-90, looseness=1.00] (89) to (90.center);
		\draw [in=-165, out=90, looseness=0.75] (96.center) to (94);
		\draw [in=90, out=-15, looseness=0.75] (94) to (93.center);
		\draw (94) to (97);
		\draw (97) to (98.center);
		\draw [in=-90, out=90, looseness=1.00] (95.center) to (93.center);
		\draw [in=90, out=-90, looseness=1.00] (96.center) to (100.center);
		\draw [in=-165, out=90, looseness=0.75] (102.center) to (105);
		\draw [in=90, out=-15, looseness=0.75] (105) to (107.center);
		\draw (105) to (106);
		\draw (106) to (104.center);
		\draw (108.center) to (102.center);
		\draw (107.center) to (103.center);
		\draw (114) to (118);
		\draw (111) to (117);
		\draw (114) to (115.center);
		\draw (118) to (116);
		\draw (114) to (110);
		\draw (110) to (116);
		\draw [in=90, out=-90, looseness=1.00] (116) to (113.center);
		\draw [in=90, out=-90, looseness=1.00] (110) to (112.center);
	\end{pgfonlayer}
\end{tikzpicture}\\
\end{proof}
\begin{proposition}(ZW rule $nat^m_w$)~\newline \\
	$
	ZX\vdash
	\left\llbracket~
	\begin{tikzpicture}
	\begin{pgfonlayer}{nodelayer}
		\node [style=bn] (0) at (-0.5, -0.5) {};
		\node [style=bn] (1) at (-0.5, -0.7499999) {};
		\node [style=bn] (2) at (0.5000002, 0.5) {};
		\node [style=dn] (3) at (0, -0) {};
		\node [style=none] (4) at (0.5000002, 1) {};
		\node [style=none] (5) at (-0.5, 1) {};
		\node [style=bn] (6) at (0.5000002, -0.7499999) {};
		\node [style=bn] (7) at (-0.5, 0.75) {};
		\node [style=bn] (8) at (0.5000002, 0.75) {};
		\node [style=none] (9) at (0.5000002, -1) {};
		\node [style=bn] (10) at (-0.5, 0.5) {};
		\node [style=none] (11) at (-0.5, -1) {};
		\node [style=bn] (12) at (0.5000002, -0.5) {};
	\end{pgfonlayer}
	\begin{pgfonlayer}{edgelayer}
		\draw (10) to (12);
		\draw [bend right, looseness=1.00] (12) to (2);
		\draw (0) to (2);
		\draw [bend right, looseness=1.00] (10) to (0);
		\draw (0) to (1);
		\draw (12) to (6);
		\draw (2) to (8);
		\draw (10) to (7);
		\draw (7) to (5.center);
		\draw (8) to (4.center);
		\draw (1) to (11.center);
		\draw (6) to (9.center);
	\end{pgfonlayer}
\end{tikzpicture}
	~\right\rrbracket_{WX}
	=
	\left\llbracket~
	\begin{tikzpicture}
	\begin{pgfonlayer}{nodelayer}
		\node [style=bn] (0) at (0, 0.5) {};
		\node [style=none] (1) at (0.5000002, -0.7499999) {};
		\node [style=none] (2) at (0.5000002, 1) {};
		\node [style=bn] (3) at (0, -0.25) {};
		\node [style=none] (4) at (-0.5, 1) {};
		\node [style=none] (5) at (-0.5, -0.7499999) {};
		\node [style=bn] (6) at (0, -0) {};
		\node [style=bn] (7) at (0, 0.25) {};
	\end{pgfonlayer}
	\begin{pgfonlayer}{edgelayer}
		\draw [in=-90, out=150, looseness=0.75] (0) to (4.center);
		\draw [in=-90, out=30, looseness=0.75] (0) to (2.center);
		\draw (7) to (0);
		\draw (7) to (6);
		\draw (6) to (3);
		\draw [in=90, out=-150, looseness=0.75] (3) to (5.center);
		\draw [in=90, out=-30, looseness=0.75] (3) to (1.center);
	\end{pgfonlayer}
\end{tikzpicture}
	~\right\rrbracket_{WX}
	$
\end{proposition}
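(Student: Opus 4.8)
The plan is to follow the template already used for the other $W$-rules in this appendix: expand both diagrams under $\llbracket\cdot\rrbracket_{WX}$, replacing every black $W$-node by its fixed image \tikzfig{diagrams//winzx} and expanding the remaining generators by their definitions, and then rewrite the translated left-hand side into the translated right-hand side using only ZX rules and the lemmas proved above. After the substitution both sides become diagrams built from green and red spiders, Hadamard boxes and triangles, and the identity to be established is exactly the $W$-bialgebra compatibility law (comultiplication is a monoid homomorphism) written in ZX.

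First I would simplify each side locally. On the two $W$-comultiplications sitting at the top of the left-hand side I would fuse the green spiders with $S1$ and use \ref{lem:com} to bring the pairs of legs into a canonical symmetric order, and \ref{lem:1} to slide the Hadamard boxes so that the two copies of \tikzfig{diagrams//winzx} meet along a single green--red pair. At that meeting point the central $W$-multiplication produces a spurious green--red bigon, which I would remove with the Hopf law \ref{lem:hopf} together with $B1$ and $B2$. The core of the argument is then to move the triangles — the genuinely non-linear ingredient of $W$ — across the fused spiders: here I would invoke \ref{lem:4} and \ref{lem:5} (triangle through a green, resp.\ red, spider) and the triangle rules $TR7$, $TR8$ to commute each triangle past the merge, after which the diagram is a product of triangle-free spiders that rearranges into the right-hand side by $S1$, $S3$ and a final use of \ref{lem:com}. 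Where it shortens the work I would reuse the already-established translated rules $un^{co}_{w,L}$, $un^{co}_{w,R}$ and $nat^w_w$ as black-box ZX identities, exactly as the proof of $nat^w_w$ reused $sym_3$ and the unit laws through an explicit intermediate \textbf{Claim}.

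The hard part will be handling the two triangles simultaneously across the central $W$-multiplication: unlike the coassociativity rule $nat^w_w$, which merely reshuffles copies of $W$, the bialgebra law genuinely mixes a multiplication with a comultiplication, so one triangle must be pushed through a green fusion while the other is pushed through a red fusion in a compatible way, and the Hadamard and $\pi$-phase bookkeeping coming from \ref{lem:1} and \ref{lem:6} must be tracked so that nothing stray is left behind. To keep this under control I would isolate a small intermediate \textbf{Claim} — analogous to the one in the $nat^w_w$ proof — describing how a single triangle commutes with the $W$-merge, prove it once using \ref{lem:4}, \ref{lem:5}, $TR7$, $TR8$ and $B2$, and then apply it twice symmetrically; this collapses the full equation to a triangle-free spider identity that closes by $S1$, $S3$ and \ref{lem:hopf}.
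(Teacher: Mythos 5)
Your overall strategy --- translate both sides under $\llbracket\cdot\rrbracket_{WX}$ and close the gap with ZX rules and the appendix lemmas --- is the only one available and is indeed what the paper intends, but note that the paper does not carry it out itself: it defers to \cite{Emmanuel}, Proposition~7 part~$5a$, which is an eight-part argument, and merely records which rules each part uses. Measured against that record, your concrete plan has a genuine gap. The rules that actually carry the proof are $TR1$ (commuting the triangle past $\pi$), $TR9$, $TR11$, $K2$ and above all the $\pi$-copy lemma~\ref{lem:6b}, which appears in essentially every one of the eight parts, together with lemmas~\ref{lem:6} and~\ref{lem:7}; none of these occur in your toolkit. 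Your plan leans instead on \ref{lem:4}, \ref{lem:5}, $TR7$, $TR8$ and the Hopf law, which suffice for the easier $W$-rules (the unit laws, $nat^w_w$) but not here: the bialgebra law forces the triangle inside the image of the $W$-node to interact with an X-spider merge, and that interaction generates $\pi$-phases (via \ref{lem:6b}) that must then be commuted back through triangles with $TR1$ --- this is precisely the bookkeeping your sketch gestures at but does not supply, and it is not symmetric between the two branches, so ``prove one claim and apply it twice symmetrically'' underestimates the difficulty.

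There is also a concrete misstatement: after pushing the triangles through you say the diagram becomes ``a product of triangle-free spiders that rearranges into the right-hand side,'' but the translated right-hand side is not triangle-free --- $\llbracket\cdot\rrbracket_{WX}$ sends every black $W$-node to a ZX diagram containing a triangle, and the right-hand side contains two such nodes. So the endpoint of your reduction is wrong as stated; at best both sides can be brought to a common normal form, which is what the staged structure of parts (\textit{i})--(\textit{viii}) in \cite{Emmanuel} accomplishes. Reusing $un^{co}_{w,L}$, $un^{co}_{w,R}$ and $nat^w_w$ as black boxes is harmless but does not supply the missing mixing of multiplication with comultiplication. To repair the proposal you should either reproduce the decomposition of \cite{Emmanuel}, Proposition~7 part~$5a$, or at minimum bring \ref{lem:6b}, \ref{lem:6}, \ref{lem:7}, $TR1$, $TR9$, $TR11$ and $K2$ into the argument and show explicitly how the generated $\pi$-phases cancel.
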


\begin{proof}
	Proof in \cite{Emmanuel}, proposition 7, part $5a$. It was proved in eight parts, (\textit{i}), (\textit{ii}), (\textit{iii}), (\textit{iv}), (\textit{v}), (\textit{vi}), (\textit{vii}), (\textit{viii}).
	
	The rules used in part (\textit{i}) are $B2$, $S1$, $H$, $TR5$, $\ref{lem:6b}$.
	
	The rules used in part (\textit{ii}) are $\ref{lem:4}$, $S1$, $B2$, $S1$, $TR11$, $\ref{lem:6b}$, $TR1$.
	
	The rules used in part (\textit{iii}) are $TR6$, $\ref{lem:4}$, $S1$, $TR8$, $\ref{lem:6b}$.
	
	The rules used in part (\textit{iv}) are $\ref{lem:4}$, $S1$, $B2$, $\ref{lem:6b}$, $TR11$.
	
	The rules used in part (\textit{v}) are $\ref{lem:6b}$, part (\textit{v}).
	
	The rules used in part (\textit{vi}) are part (\textit{i}), $\ref{lem:4}$, $TR9$, $S1$, $\ref{lem:6b}$, $\ref{lem:hopf}$, $TR1$, $K2$, part (\textit{iii}), part (\textit{v}), part (\textit{ii}).
	
	The rules used in part (\textit{vii}) are $\ref{lem:6b}$, $S1$, $TR1$, $B2$, $TR7$, $\ref{lem:4}$, $TR11$.
	
	The rules used in part (\textit{viii}) are $\ref{lem:6b}$, $H$, $B2$, $S1$, $\ref{lem:hopf}$, $\ref{lem:6}$.
	
	Finally, the proof of this proposition uses the rules $\ref{lem:4}$, $S1$, part (\textit{viii}), $\ref{lem:hopf}$, part (\textit{vii}), part (\textit{vi}), $B2$, $\ref{lem:com}$, $\ref{lem:7}$, $\ref{lem:6b}$, $TR1$.
\end{proof}
\begin{proposition}(ZW rule $nat^{m \eta}_{w}$)~\newline\\
	$
	ZX\vdash
	\left\llbracket~

	~\right\rrbracket_{WX}
	$
\end{proposition}

\begin{proof}
	Proof follows directly from rules $S1$, $L1$.
\end{proof}

\end{document}